\DeclareMathOperator{\enc}{Enc}
\DeclareMathOperator{\plays}{Plays}
\newcommand{\nats}{\mathbb N}
\newcommand{\ints}{\mathbb Z}
\newcommand{\s}{\mathsf s}
\newcommand{\p}{\mathcal P}
\newcommand{\A}{\mathcal A}
\begin{document}

\title{Reachability in Two-Clock Timed Automata is PSPACE-complete}
\author{John Fearnley \inst{1} \and Marcin Jurdzi\'nski\inst{2} }
\institute{
Department of Computer Science, University of Liverpool, UK \and
Department of Computer Science, University of Warwick, UK 
}
\maketitle

\begin{abstract}
A recent result has shown that reachability in two-clock timed automata is
log-space equivalent to reachability in bounded one-counter
automata~\cite{HOW12}. We show that reachability in bounded one-counter automata
is PSPACE-complete. 
\end{abstract}

\section{Introduction}

Timed automata~\cite{AD94} are a successful and widely used formalism, which are
used in the analysis and verification of real time systems. A timed automaton is
a non-deterministic finite automaton that is equipped with a number of
real-valued \emph{clocks}, which allow the automaton to measure the passage of
time. 

Perhaps the most fundamental problem for timed automata is the
\emph{reachability} problem: given an initial state, can the automaton perform a
sequence of transitions in order to reach a specified target state? In their
seminal paper on timed automata~\cite{AD94}, Alur and Dill showed that this
problem is PSPACE-complete. To show hardness for PSPACE, their proof starts with
a linear bounded automaton (LBA), which is a non-deterministic Turing machine
with a fixed tape length~$n$. They produce a timed automaton with $2n + 1$
clocks, and showed that the timed automaton can reach a specified state if and
only if the LBA halts.

However, the work of Alur and Dill did not address the case where the number of
clocks is small. This was rectified by Courcoubetis and Yannakakis~\cite{CY92},
who showed that reachability in timed automata with only three clocks is still
PSPACE-complete. Their proof cleverly encodes the tape of an LBA in a single
clock, and then uses the two additional clocks to perform
all necessary operations on the encoded tape. In contrast to this, Laroussinie
et al.\ have shown that reachability in one-clock timed automata is complete for
NLOGSPACE, and therefore no more difficult than computing reachability in
directed graphs~\cite{LMS04}. 

The complexity of reachability in two-clock timed automata has been left
open.  The best known lower bound was given by Laroussinie et al., who gave a
proof that the problem is NP-hard via a very natural reduction from 
subset-sum~\cite{LMS04}. Moreover, the problem lies in PSPACE, because
reachability in two-clock timed automata is obviously easier than reachability
in three-clock timed automata. However, the PSPACE-hardness proof of
Courcoubetis and Yannakakis seems to fundamentally require three clocks, and
does not naturally extend to the two-clock case. 
Naves~\cite{Naves} has shown that several extensions to two-clock timed
automata lead to PSPACE-completeness, but his work does not advance upon the
NP-hard lower bound for unextended two-clock timed automata.

In a recent paper, Haase et al.\ have shown a link between reachability in timed
automata and reachability in \emph{bounded counter automata}~\cite{HOW12}. A
bounded counter automaton is a non-deterministic finite automaton equipped with
a set of counters, and the transitions of the automaton may add or subtract
arbitrary integer constants to the counters. The
state space of each counter is bounded by some natural number $b$, so the
counter may only take values in the range $[0, b]$. Moreover, transitions may
only be taken if they do not increase or decrease a counter beyond the allowable
bounds. This gives these seemingly simple automata a surprising amount of power,
because the bounds can be used to implement inequality tests against the
counters.

Haase et al.\ show that reachability in two-clock timed automata is
log-space equivalent to reachability in bounded \emph{one}-counter
automata. Reachability in bounded one-counter automata has also been studied in
the context of one-clock timed automata with energy constraints~\cite{PFGMS08},
where it was shown that the problem lies in PSPACE, and is NP-hard. It has also
been shown that the reachability problem for \emph{unbounded} one-counter
automata is NP-complete~\cite{HKOW09}, but the NP containment proof does not
seem to generalise to bounded one-counter automata.

\paragraph{\bf Our contribution.} 

We show that satisfiability for quantified boolean formulas can be reduced, in
polynomial time, to reachability in bounded one-counter automata. Hence, we show
that reachability in bounded one-counter automata is PSPACE-complete, and
therefore we resolve the complexity of reachability in two-clock timed automata.
Our reduction uses two intermediate steps: \emph{subset-sum games} and
\emph{bounded counter-stack automata}.

Counter automata are naturally suited for solving subset-sum problems, so our
reduction starts with a quantified version of subset-sum, which we call
subset-sum games. One interpretation of satisfiability for quantified boolean
formulas is to view the problem as a game between an \emph{existential} player,
and a \emph{universal} player. The players take in turns to set their
propositions to true or false, and the existential
player wins if and only if the boolean formula is satisfied. Subset-sum games
follow the same pattern, but apply it to subset-sum: the two players alternate
in choosing numbers from sets, and the existential player wins if and
only if the chosen numbers sum to a given target. Previous work by Travers can
be applied to show that subset-sum games are PSPACE-complete~\cite{T11}.

We reduce subset-sum games to reachability in bounded one-counter automata.
However, we will not do this directly. Instead, we introduce bounded
counter-stack automata, which are able to store multiple counters, but have a
stack-like restriction on how these counters may be accessed. These automata are
a convenient intermediate step, because having access to multiple counters makes
it easier for us to implement subset-sum games. Moreover, the stack based
restrictions means that it is relatively straightforward to to show that
reachability in bounded counter-stack automata is reducible, in polynomial time,
to reachability in bounded one-counter automata, which completes our result.

\section{Bounded one-counter automata} 

A bounded one-counter automaton has a single counter that can store
values between $0$ and some bound $b \in \nats$. The automaton may add or
subtract values from the counter, so long as the bounds of $0$ and $b$ are not
overstepped. This can be used to test inequalities against the
counter. For example, to test whether the counter is larger than some $n \in
\nats$, we first attempt to subtract $n + 1$ from the counter, then, if that
works, we add $n+1$ back to the counter. This creates a sequence of two
transitions which can be taken if, and only if, the counter is greater than $n$.
A similar construction can be given for less-than tests. For the sake of
convenience, we will include explicit inequality testing in our formal
definition, with the understanding that this is not actually necessary.

We now give a formal definition. For two integers $a, b \in \ints$ we define
$[a, b] = \{n \in \ints \; : \; a \le n \le b\}$ to be the subset of integers
between~$a$ and~$b$. A bounded one-counter automaton is defined by a tuple $(L,
b, \Delta, l_0)$, where $L$ is a finite set of locations, $b \in \nats$ is a
global counter bound, $\Delta$ specifies the set of transitions,  and $l_0 \in
L$ is the initial location. Each transition in $\Delta$ has the form $(l, p,
g_1, g_2, l')$, where $l$ and $l'$ are locations, $p \in [-b, b]$ specifies how
the counter should be modified, and $g_1, g_2 \in [0, b]$ give lower and upper
guards for the counter.

Each state of the automaton consists of a location $l \in L$ along with a
counter value $c$. Thus, we define the set of states to be $L \times [0, b]$. A
transition exists between a state $(l, c) \in S$, and a state $(l', c') \in S$
if there is a transition $(l, p, g_1, g_2, l') \in \Delta$, where $g_1 \le c \le
g_2$, and $c' = c + p$.

The reachability problem for bounded one-counter automaton is: starting at the
state $(l_0, 0)$, can the automaton reach a specified target state $(l_t, c_t)$?
It has been shown that the reachability problem for bounded one-counter automata
is equivalent to the reachability problem for two-clock timed automata.

\begin{theorem}[\cite{HOW12}]
Reachability in bounded one-counter automata is log-space equivalent to
reachability in two-clock timed automata.
\end{theorem}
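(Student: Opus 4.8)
The plan is to prove the two directions of the equivalence separately, in each case giving a log-space construction that maps one model to the other while preserving reachability. The unifying observation is that a single bounded counter with bound $b$ carries essentially the same information as the relationship between the fractional parts of two clocks, so the heart of both reductions is an encoding that stores the counter value $c \in [0,b]$ in the continuous state of the clocks.

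\emph{First direction (bounded one-counter to two-clock).} Given an automaton $(L, b, \Delta, l_0)$, I would build a timed automaton with clocks $x$ and $y$ in which the counter value $c$ is encoded, at designated synchronisation instants, by the offset between the two clocks at scale $1/(b+1)$; concretely, the invariant to maintain is $x - y \equiv c/(b+1) \pmod 1$ whenever the simulation sits between operations. A gadget for the transition $(l, p, g_1, g_2, l')$ then (i) tests the guard by clock constraints that, after scaling by $b+1$, correspond to $g_1 \le c \le g_2$, and (ii) realises the update $c \mapsto c + p$ by letting exactly $|p|/(b+1)$ time units elapse and resetting a clock, shifting the offset by the required amount. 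Enforcing the bound $[0,b]$ amounts to forbidding the offset from wrapping past an integer boundary, which is again a clock constraint. The construction is local and uses only two clocks, so it is computable in log-space.

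\emph{Second direction (two-clock to bounded one-counter).} Here I would pass through a finite abstraction of the timed automaton's state space. The integer parts of the two clocks are bounded by the largest constant $M$ occurring in the guards, so they can be absorbed into the finite control, together with the qualitative ordering of the two fractional parts. The one remaining piece of continuous information---the precise fractional value, which for two clocks is essentially one-dimensional---is stored in a counter with bound proportional to $M$. Each timed transition is then simulated by a short sequence of counter operations: a guard becomes an inequality test, a reset becomes an assignment to a boundary value, and a time-elapse step becomes an increment. The essential point, which I would establish via a corner-point/region argument, is that although arbitrarily many real-valued delays are available, only finitely many are qualitatively distinct, and the finite control can branch over these choices.

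The main obstacle I expect is faithfully simulating the continuous passage of time by discrete counter operations in the second direction, and, dually, preventing round-off drift in the fractional encoding in the first. In both cases the difficulty is that timed-automaton guards compare clocks only against integers, whereas the encoded counter lives at a finer, fractional resolution; the key technical step is therefore the scaling-and-synchronisation argument that reconciles the integer granularity of clock tests with the unit granularity of counter operations, ensuring that a run in one model exists if and only if the corresponding run exists in the other.
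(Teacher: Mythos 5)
This theorem is imported from~\cite{HOW12}; the paper you were given contains no proof of it, so your sketch can only be measured against the cited construction. Both directions of your sketch have concrete gaps. In the first direction, the scaled encoding $x - y \equiv c/(b+1) \pmod 1$ cannot be tested: timed-automaton guards compare clocks only against \emph{integer} constants, so checking $g_1 \le c \le g_2$ at resolution $1/(b+1)$ would require thresholds $g_i/(b+1)$ that no guard can express, and no log-space gadget recovers them. The working encoding keeps the counter at integer granularity --- $c$ is the integer difference of the two clocks (equivalently, the value of one clock at the instant the other is reset) --- so that guard tests and the update $c \mapsto c + p$ are realised by letting integer amounts of time elapse and resetting, with all constants written in binary. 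The ``scaling-and-synchronisation argument'' you defer to is exactly the step that fails rather than the step that saves the construction.

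In the second direction, absorbing the integer parts of both clocks into the finite control yields $\Omega(M^2)$ locations, where $M$ is the largest constant in the guards; since $M$ is given in binary this is an exponential blow-up, not a log-space (or even polynomial-time) reduction. This blow-up is precisely why the equivalence is nontrivial. The idea your sketch is missing is that time elapse preserves the difference $x - y$, so between resets the entire continuous state is captured by that single bounded quantity; it is the \emph{difference} (shifted into $[0, 2M]$) that the counter must store, while the finite control retains only a constant amount of qualitative information (which clock was reset last, the sign of the difference, the ordering of fractional parts). Relatedly, ``a time-elapse step becomes an increment'' is wrong under the difference encoding --- elapse leaves the difference unchanged --- and the simulation instead proceeds by nondeterministically choosing which integer threshold a clock crosses next, with the counter's boundedness supplying the inequality tests. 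As written, neither direction of your proposal compiles into a correct log-space reduction.
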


\section{Subset-sum games}

A subset-sum game is played between an \emph{existential} player and a
\emph{universal} player. The game is specified by a pair $(\psi, T)$,  where $T
\in \nats$, and $\psi$ is a list:
\begin{equation*}
\forall \; \{A_1, B_1\} \; \exists \; \{E_1, F_1\} \; \dots \; \forall \; \{A_n,
B_n\} \; \exists \; \{E_n, F_n\},
\end{equation*} 
where $A_i, B_i, E_i$, and $F_i$, are all natural numbers.
 
The game is played in rounds. In the first round, the universal player chooses
an element from $\{A_1, B_1\}$, and the existential player responds by choosing
an element from $\{E_1, F_1\}$. In the second round, the universal player
chooses an element from $\{A_2, B_2\}$, and existential player responds by
choosing an element from $\{E_2, F_2\}$. This pattern repeats for rounds $3$
through~$n$. Thus, at the end of the game, the players will have constructed a
sequence of numbers, and the existential player wins if and only if the sum of
these numbers is~$T$. 

Formally, the set of \emph{plays} of the game is the set:
\begin{equation*}
\p = \prod_{1 \le j \le n} \{A_j, B_j\} \times \{E_j, F_j\}.
\end{equation*}
A play~$P \in \p$ is winning for the existential player if and only if $\sum P = T$.

A strategy for the existential player is a list of functions $\s = (s_1, s_2,
\dots, s_n)$, where each function $s_i$ dictates how the existential player
should play in the $i$th round of the game. Thus, each function $s_i$ is of the
form:
\begin{equation*}
s_i : \prod_{1 \le j \le i} \{A_j, B_j\} \rightarrow \{E_i, F_i\}.
\end{equation*}
This means that the function $s_i$ maps the first~$i$ moves of the universal
player to a decision for the existential player in the $i$th round.

A play~$P$ conforms to a strategy~$\s$ if the decisions made by the existential
player in~$P$ always agree with~$\s$. More formally, for each $i$ in the range
$1 \le i \le n$, we define $F_i = P \cap \prod_{1 \le j \le i} \{A_j, B_j\}$ to
be the first $i$ moves made by the universal player. The play $P$ conforms to a
strategy $\s = (s_1, s_2, \dots, s_k)$ if $s_i(F_i) \in P$, for all $i$.
Given a strategy $\s$, we define the set of conforming plays to be
$\plays(\s)$.
Note that, since the universal player makes exactly $n$ choices, the set
$\plays(\s)$ contains exactly $2^n$ different plays.

A strategy $\s$ is \emph{winning} if every play $P \in \plays(\s)$ is
winning for the existential player. The \emph{subset-sum game problem} is to
decide, for a given SSG instance $(\psi, T)$, whether the existential player has
a winning strategy for $(\psi, T)$.

The SSG problem clearly lies in PSPACE, because it can be solved on a
polynomial time alternating Turing machine. A quantified version of subset-sum
has been shown to be PSPACE-hard, via a reduction from quantified boolean
formulas~\cite{T11}. Since SSGs are essentially a quantified version of
subset-sum, the proof of PSPACE-hardness easily carries over. See
Appendix~\ref{app:qss} for further details.

\begin{lemma}
\label{lem:qss}
The subset-sum game problem is PSPACE-complete.
\end{lemma}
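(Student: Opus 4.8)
The plan is to prove Lemma~\ref{lem:qss} by establishing the two directions separately: membership in PSPACE, and PSPACE-hardness. The membership direction is essentially already sketched in the excerpt and is routine, so I would dispatch it first and briefly. The real content is the hardness reduction from quantified boolean formulas (QBF), and this is where I would concentrate the argument.

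For membership, I would argue that the SSG problem is decided by a polynomial-time alternating Turing machine, and then invoke the classical characterisation $\mathrm{AP} = \mathrm{PSPACE}$. Concretely, the machine plays the game: for each round $i$ from $1$ to $n$, it uses a universal branch to guess the move from $\{A_i, B_i\}$ and an existential branch to guess the response from $\{E_i, F_i\}$, maintaining a running sum. Since each number is given in binary and there are $2n$ numbers, the running sum and the target $T$ have polynomially many bits, so the whole computation — including the final comparison $\sum P = T$ — runs in polynomial time. The existential player has a winning strategy exactly when the alternating machine accepts, so the problem lies in $\mathrm{PSPACE}$.

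For hardness, I would reduce from QBF with a formula $Q_1 x_1 \dots Q_m x_m\, \phi$ where $\phi$ is in conjunctive normal form, following the standard subset-sum encoding of satisfiability but adapting it to the quantified, game-theoretic setting. The key idea in subset-sum reductions is to encode a truth assignment and a per-clause satisfaction tally into disjoint blocks of decimal (or sufficiently large base) digits, so that there is no carrying between blocks and the target digits can be set to force each clause to be satisfied. Here the quantifier alternation of the QBF must be mirrored by the alternation of the SSG: a universally quantified variable $x_i$ should correspond to a universal pair $\{A_i, B_i\}$ encoding the choice of truth value, and an existentially quantified variable should correspond to an existential pair $\{E_i, F_i\}$; the digit blocks are arranged so that exactly one of each pair contributes to the ``assignment'' digits. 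I would add the usual slack numbers (available only to the existential player, or folded into the $\{E_i,F_i\}$ choices) that let each clause block reach its target digit precisely when that clause is satisfied by the chosen assignment. Since the excerpt cites Travers~\cite{T11} for a PSPACE-hardness proof of a quantified subset-sum, I would argue that the alternation structure of SSGs matches that quantified problem and adapt the reduction directly, verifying that the strict alternation $\forall \exists \forall \exists \dots$ of the SSG format can simulate an arbitrary quantifier prefix (padding with dummy variables where the QBF prefix has consecutive identical quantifiers, which costs only a polynomial blow-up).

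The main obstacle is ensuring the correspondence between winning SSG strategies and satisfying QBF assignments is exact in both directions. In particular, I must guarantee that the only way for the existential player to hit the target $T$ is to genuinely encode a satisfying response to each universal choice, and that no spurious combination of numbers — exploiting carries or unintended digit interactions — can reach $T$ by accident. Choosing a large enough base for the digit blocks (so that the maximum possible sum within any block cannot carry into the next) handles the no-carry requirement, but I would need to verify carefully that the slack contributions and the per-round choices interact only within their intended blocks. The other delicate point is aligning the strict $\forall\exists$ alternation and the reliance on the existing result of~\cite{T11}; rather than rebuild the entire digit construction, I would state the reduction from QBF to the quantified subset-sum problem of~\cite{T11}, and then give the short argument that an instance of that problem is an SSG instance (up to padding of the quantifier prefix), deferring the full digit-level verification to Appendix~\ref{app:qss} as the excerpt promises.
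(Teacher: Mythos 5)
Your proposal matches the paper's proof: membership via a polynomial-time alternating Turing machine, and hardness by lifting the standard digit-block subset-sum reduction from CNF-SAT to the quantified setting, turning each quantified variable into the corresponding $\forall$ or $\exists$ pair, giving the clause slack numbers to the existential player (as pairs with $0$), and padding with dummy $\forall\{0,0\}$ / $\exists\{0,0\}$ quantifiers to enforce strict alternation. This is exactly the argument in Appendix~\ref{app:qss}, so no further comparison is needed.
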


\section{Counter-Stack Automata}
\label{sec:stack}

\paragraph{\bf Outline.} 

In this section we ask: can we use a bounded one-counter automaton to store
multiple counters? The answer is yes, but doing so forces an interesting set of
restrictions on the way in which the counters are accessed. By the end of this
section, we will have formalised these restrictions as \emph{counter-stack}
automata.

Suppose that we have a bounded-one counter automaton with counter $c$ and bound
$b = 15$. Hence, the width of the counter is~$4$ bits. Now suppose that we wish
to store two $2$-bit counters $c_1$ and $c_2$ in $c$. We can do this as follows:

\begin{center}
\begin{tikzpicture}[shorten >=1pt,node distance=1cm,auto]
\node[font=\huge] (c) {\texttt{c}};
\node[font=\huge] (equals) [right of=c] {\texttt{=}};
\node[font=\huge] (1) [right of=equals] {\texttt{1}};
\node[font=\huge] (2) [right of=1] {\texttt{0}};
\node[font=\huge] (3) [right of=2] {\texttt{0}};
\node[font=\huge] (4) [right of=3] {\texttt{1}};
\draw [dashed,rounded corners] (1.6, -0.5) rectangle (3.4, 0.5);
\draw [dashed,rounded corners] (3.6, -0.5) rectangle (5.4, 0.5);
\node[font=\Large] (l) at (2.6, -0.8) {$c_2$};
\node[font=\Large] (l) at (4.5, -0.8) {$c_1$};
\end{tikzpicture} 
\end{center}
We allocate the top two bits of~$c$ to store~$c_2$, and the bottom two bits to
store~$c_1$. We can easily write to both counters: if we want to increment~$c_2$
then we add~$4$ to~$c$, and if we want to increment~$c_1$ then we add~$1$
to~$c$.

However, if we want to test equality, then things become more interesting. It is
easy to test equality against $c_2$: if we want to test whether $c_2 = 2$, then
we test whether $8 \le c \le 11$ holds. But, we cannot easily test whether $c_1
= 2$ because we would have to test whether $c$ is $2$, $6$, $10$, or $14$, and
this list grows exponentially as the counters get wider. However, if we know
that $c_2 = 1$, then we only need to test whether $c = 6$. Thus, we arrive at
the following guiding principal: if you want to test equality against $c_i$,
then you must know the values of $c_j$ for all $j > i$. Counter-stack automata
are a formalisation of this principal. 

\paragraph{\bf Counter-stack automata.}

A counter-stack automaton has a set of~$k$ distinct counters, which are
referred to as~$c_1$ through~$c_k$. For our initial definitions, we will allow
the counters to take all values from $\nats$, but we will later refine this by
defining \emph{bounded} counter-stack automata. The defining feature of a
counter-stack automaton is that the counters are arranged in a stack-like
fashion:
\begin{itemize}
\item All counters may be increased at any time.
\item $c_i$ may only be tested for equality if the values of $c_{i+1}$ through $c_k$ are known.
\item $c_i$ may only be reset if the values of $c_i$ through $c_k$ are
known.
\end{itemize}

When the automaton increases a counter, it adds a specified number $n \in
\nats$ to that counter. The automaton has the ability to perform equality tests
against a counter, but the stack-based restrictions must be respected. An
example of a valid equality test would be $c_k = 3 \land c_{k-1} = 10$, because
$c_{k-1} = 10$ only needs to be tested in the case where $c_k = 3$ is known to
hold. Conversely, the test $c_{k-1} = 10$ by itself is invalid, because it
places no restrictions on the value of $c_k$. 

The automaton may also reset a counter, but the stack-based restrictions
apply. Counter~$c_i$ may only be reset by a transition, if that transition
tests equality against the values of $c_{i}$ through
$c_{k}$. For example, $c_{k-1}$ may only be reset if the transition is guarded
by a test of the form $c_{k-1} = n_1 \land c_{k-2} = n_2$.

\paragraph{\bf Formal definition.} 

A counter-stack automaton is a tuple $(L, C, \Delta, l_0)$, where $L$ is a
finite set of locations, $C = [1, k]$ is a set of counter indexes, $l_0 \in L$
is an initial state, and $\Delta$ specifies the transition relation. Each
transition in $\Delta$ has the form $(l, E, I, R, l')$ where:
\begin{itemize}
\item $l, l' \in L$ is a pair of locations,
\item $E$ is a partial function from $C$ to $\nats$ which specifies the equality
tests. If $E(i)$ is defined for some $i$, then $E(j)$ must be defined for all $j
\in C$ with $j > i$.
\item $I \in \nats^k$ specifies the how the counters must be increased,
\item $R \subseteq C$ specifies the set of counters that must be reset. It is
required that $E(r)$ is defined for every $r \in R$.
\end{itemize}

Each state of the automaton is a location annotated with values for each of
the~$k$ counters. That is, the state space of the automaton is $L \times
\nats^k$. A state $(l, c_1, c_2, \dots, c_k)$ can transition to a state $(l',
c_1', c_2', \dots, c_k')$ if, and only if, there exists a transition $(l, E, I,
R, l') \in \Delta$, where the following conditions hold:
\begin{itemize}
\item For every $i$ for which $E(i)$ is defined, we must have $c_i = E(i)$.
\item For every $i \in R$, we must have $c'_i = 0$.
\item For every $i \notin R$, we must have $c'_i = c_i + I_i$.
\end{itemize}

A \emph{run} is a sequence of states $s_0, s_1, \dots, s_n$, where each $s_i$
can transition to $s_{i+1}$. To solve the reachability problem for counter-stack
automata, we must decide whether there is a run from $(l_0, 0, 0, \dots, 0)$ to
a target state $(l_t, t_1, t_2, \dots, t_k)$.

A counter-stack automaton is \emph{$b$-bounded}, for some $b \in \nats$, if it
is impossible for the automaton to increase a counter beyond $b$. Formally, this
condition requires that, for every state $(l, c_1, c_2, \dots, c_k)$ that can be
reached by a run from $(l_0, 0, 0, \dots, 0)$, we have $c_i \le b$ for all $i$.
We say that a counter-stack automaton is bounded, if it is $b$-bounded for some
$b \in \nats$.

\paragraph{\bf Simulation by a bounded one-counter automaton.}
A bounded counter-stack automaton is designed to be simulated by a
bounded one-counter automaton. To do this, we follow the construction outlined
at the start of this section: we split the bits of the counter $c$ into $k$
chunks, where each chunk represents one of the counters~$c_i$. Note that the
boundedness assumption is crucial, because otherwise incrementing $c_i$ may
overflow the allotted space, and inadvertently modify the value of $c_{i+1}$.
See Appendix~\ref{app:bcsa2boca} for more details of the construction.

\begin{lemma}
\label{lem:bcsa2boca}
Reachability in bounded counter-stack automata is polynomial-time
reducible to reachability in bounded one-counter automata.
\end{lemma}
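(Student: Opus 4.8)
The plan is to make the bit-packing intuition from the start of the section fully precise. Given a $b$-bounded counter-stack automaton with $k$ counters $c_1, \dots, c_k$, every reachable counter value lies in $[0, b]$, so each counter can be stored in $w = \lceil \log_2(b+1) \rceil$ bits. I would encode the tuple $(c_1, \dots, c_k)$ as the single integer $\enc(c_1, \dots, c_k) = \sum_{i=1}^k c_i \cdot 2^{(i-1)w}$, allocating the block of $w$ bits starting at position $(i-1)w$ to counter $c_i$. The simulating bounded one-counter automaton uses bound $B = 2^{kw} - 1$, which is at most polynomial in the bit-size of the original instance (since $w$ and $k$ are), and its locations are those of the counter-stack automaton together with a bounded number of auxiliary locations per transition. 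The initial state $(l_0, 0, \dots, 0)$ maps to $(l_0, 0)$, and the target $(l_t, t_1, \dots, t_k)$ maps to $(l_t, \enc(t_1, \dots, t_k))$.

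Next I would translate each transition $(l, E, I, R, l')$ into a short gadget — a small chain of one-counter transitions through fresh intermediate locations — that realises the same effect on the packed counter. The increment part is easy: adding $I_i$ to counter $c_i$ is implemented by adding $\sum_i I_i \cdot 2^{(i-1)w}$ to $c$, and boundedness of the original automaton guarantees no block overflows into its neighbour, so this single addition is faithful. The delicate part is the combination of equality tests and resets. The key observation, which is exactly the guiding principle motivating counter-stack automata, is that if $E$ is defined on a down-closed-from-the-top set of indices $\{i, i+1, \dots, k\}$, then fixing those values pins down the entire high-order portion of $c$ above position $(i-1)w$; consequently an equality test ``$c_i = E(i) \land \dots \land c_k = E(k)$'' becomes a single \emph{interval} test $g_1 \le c \le g_2$ on the packed counter, where $g_1 = \sum_{j \ge i} E(j)\cdot 2^{(j-1)w}$ and $g_2 = g_1 + (2^{(i-1)w} - 1)$. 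This is precisely the form of guard that a bounded one-counter transition supports, and it is why the stack discipline is indispensable: without knowing all higher counters, the set of matching values of $c$ would not be a contiguous interval.

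For resets I would use the requirement that $E(r)$ is defined for every $r \in R$, so that any reset counter is among the tested (hence high-order-pinned) counters. Resetting $c_i$ means subtracting its current value $c_i \cdot 2^{(i-1)w}$ from $c$; but because the test has already fixed $c_i = E(i)$, this is a subtraction of the \emph{known constant} $E(i) \cdot 2^{(i-1)w}$, which a one-counter transition can perform directly. I would therefore sequence the gadget as: first perform the interval guard to verify the $E$-tests, then subtract the known constants $\sum_{r \in R} E(r)\cdot 2^{(r-1)w}$ corresponding to the resets, then add the increment constant $\sum_i I_i \cdot 2^{(i-1)w}$, landing in location $l'$. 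Care is needed to order the subtraction before the addition (and to check that intermediate counter values stay within $[0, B]$), but since all quantities are bounded by $B$ this is routine.

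Finally I would verify correctness by exhibiting a bijection between runs of the two automata that respects the encoding $\enc$, proving by induction on run length that a state $(l, c_1, \dots, c_k)$ is reachable in the counter-stack automaton if and only if $(l, \enc(c_1, \dots, c_k))$ is reachable in the one-counter automaton (with the intermediate gadget locations collapsing to these ``real'' states). The main obstacle, and the crux of the whole reduction, is the interval-test argument: one must show carefully that the stack restrictions on $E$ and $R$ are exactly strong enough to turn every guarded reset-and-increment into interval guards plus constant additions, and that boundedness prevents any cross-block carry or borrow. Everything else — the polynomial bound on $B$, the fresh intermediate locations, and the correctness induction — is mechanical once this encoding lemma is in place.
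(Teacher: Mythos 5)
Your proposal is correct and follows essentially the same route as the paper's proof: the paper uses the identical bit-packing encoding $\enc(x,i)=x\cdot 2^{(i-1)n}$, the identical interval guard $g_1=\sum_{i\ge j}\enc(E(i),i)$ and $g_2=g_1+\enc(1,j)-1$, and the identical known-constant subtraction for resets, except that it folds the entire update into a single transition with $p=\sum_{i\notin R}\enc(I_i,i)-\sum_{i\in R}\enc(E(i),i)$ rather than a multi-step gadget, so no bookkeeping about intermediate values is needed. One small slip to fix: your increment constant should range only over $i\notin R$, since the semantics set a reset counter to $0$ rather than to $I_i$.
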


\section{Outline Of The Construction}

Our goal is to show that reachability in bounded counter-stack
automata is PSPACE-hard. To do this, we will show that subset-sum games can be
solved by bounded counter-stack automata. In this section, we give an
overview of our construction using the following two-round QSS game.
\begin{equation*}
\bigl ( \; \forall \; \{A_1, B_1\} \; \exists \; \{E_1, F_1\} \; \forall \;
\{A_2, B_2\} \; \exists \; \{E_2, F_2\}, \; T \bigr).
\end{equation*}
For brevity, we will refer to this instance as $(\psi, T)$ for the rest of this
section. The construction is split into two parts: the \emph{play} gadget, and
the \emph{reset} gadget.

\paragraph{\bf The play gadget.}

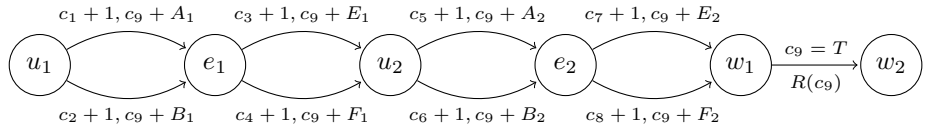
\begin{figure}
\begin{tikzpicture}[shorten >=1pt,node distance=2.33cm,auto,font=\scriptsize]
\node[state,font=\normalsize] (x) [] {$u_1$};
\node[state,font=\normalsize] (1) [right of=x] {$e_1$};
\node[state,font=\normalsize] (2) [right of=1] {$u_2$};
\node[state,font=\normalsize] (25) [right of=2] {$e_2$};
\node[state,font=\normalsize] (3) [right of=25] {$w_1$};
\node[state,font=\normalsize] (4) [right of=3,node distance=2cm] {$w_2$};
\path[->] 
	(x) edge [bend left] node [sloped,anchor=south] {$c_1 + 1, c_9 + A_1$} (1)
	(x) edge [bend right] node [sloped,anchor=north] {$c_2 + 1, c_9 + B_1$} (1)
	(1) edge [bend left] node [sloped,anchor=south] {$c_3 + 1, c_9 + E_1$} (2)
	(1) edge [bend right] node [sloped,anchor=north] {$c_4 + 1, c_9 + F_1$} (2)
	(2) edge [bend left] node [sloped,anchor=south] {$c_5 + 1, c_9 + A_2$} (25)
	(2) edge [bend right] node [sloped,anchor=north] {$c_6 + 1, c_9 + B_2$} (25)
	(25) edge [bend left] node [sloped,anchor=south] {$c_7 + 1, c_9 + E_2$} (3)
	(25) edge [bend right] node [sloped,anchor=north] {$c_8 + 1, c_9 + F_2$} (3)
	(3) edge node [anchor=south] {$c_9 = T$} node [anchor=north] {$R(c_9)$} (4)
	;
\end{tikzpicture} 
\caption{The play gadget}
\label{fig:guess}
\end{figure}

The play gadget is shown in Figure~\ref{fig:guess}. The construction uses $9$
counters. The locations are represented by circles, and the transitions are
represented by edges. The annotations on the transitions describe the
increments, resets, and equality tests: the notation $c_i + n$ indicates that
$n$ is added to counter $i$, the notation $R(c_i)$ indicates that counter~$i$ is
reset to~$0$, and the notation~$c_i = n$ indicates that the transition may only
be taken when~$c_i = n$ is satisfied.

This gadget allows the automaton to implement a play of the SSG. The
locations~$u_1$ and~$u_2$ allow the automaton to choose the first and second
moves of the universal player, while the locations~$e_1$ and~$e_2$ allow the
automaton to choose the first and second moves for the existential player. As
the play is constructed, a running total is stored in $c_9$, which
is the top counter on the stack. The final transition between~$w_1$ and~$w_2$
checks whether the existential player wins the play, and then resets $c_9$.
Thus, the set of runs between $u_1$ and $w_2$ corresponds precisely to the set
of plays won by the existential player in the SSG.

In addition to this, each outgoing transition from $u_i$ or $e_i$ comes
equipped with its own counter. This counter is incremented if and only if the
corresponding edge is used during the play, and this allows us to check
precisely which play was chosen. These counters will be used by the reset
gadget. The idea behind our construction is to force the automaton to pass
through the play gadget multiple times. Each time we pass through the play
gadget, we will check a different play, and our goal is to check a set of plays
that verify whether the existential player has a winning strategy for the SSG.

\paragraph{\bf Which plays should be checked?}
In our example, we must check four plays. The format of these plays is shown in
Table~\ref{tbl:plays}.
\vskip -0.5cm
\begin{table}
\begin{center}
\begin{tabular}{l | c | c | c | c}
Play \; & \; $u_1$ \; & \; $e_1$ \; & \; $u_2$ \; & \; $e_2$ \\ \hline
1 & $A_1$ & $E_1$ or $F_1$ & $A_2$ & \; $E_2$ or $F_2$ \; \\
2 & $A_1$ & \; Unchanged \; & $B_2$ & $E_2$ or $F_2$ \\
3 & $B_1$ & $E_1$ or $F_1$ & $A_2$ & $E_2$ or $F_2$ \\
4 & $B_1$ & Unchanged &  $B_2$ & $E_2$ or $F_2$ 
\end{tabular}
\end{center}
\caption{The set of plays that the automaton will check.}
\label{tbl:plays}
\end{table}

\vskip -0.7cm
The table shows four different plays, which cover every possible strategy choice
of the universal player. Clearly, if the existential player does have a winning
strategy, then that strategy should be able to win against all strategy choices
of the universal player. The plays are given in a very particular order: the
first two plays contain~$A_1$, while the second two plays contain~$B_1$.
Moreover, we always check~$A_2$, before moving on to~$B_2$.

We want to force the decisions made at $e_1$ and $e_2$ to form a coherent
strategy for the existential player. In this game, a strategy for the
existential player is a pair $\s = (s_1, s_2)$, where $s_i$ describes the move
that should be made at $e_i$. It is critical to note that $s_1$ only knows
whether $A_1$ or $B_1$ was chosen at $u_1$. This restriction is shown in the
table: the automaton may choose freely between $E_1$ and $F_1$ in the first
play. However, in the second play, the automaton must make the same choice as it
did in the first play. The same relationship holds between the third and fourth
plays. These restrictions ensure that the plays shown in Table~\ref{tbl:plays}
are a description of a strategy for the existential player.

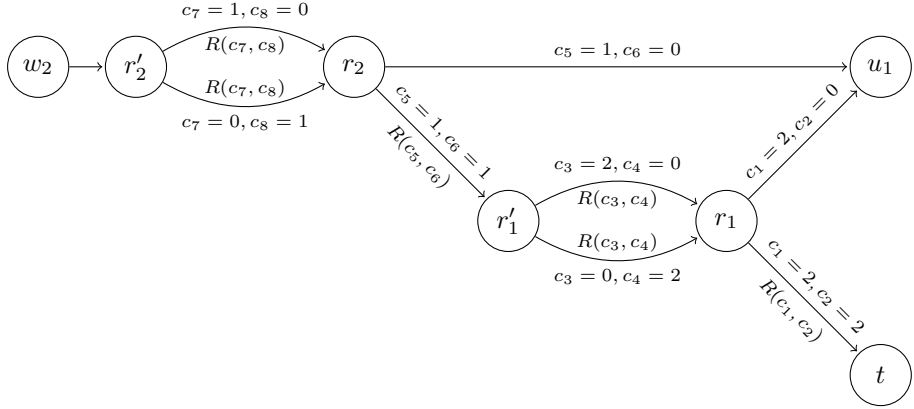
\begin{figure}
\begin{center}
\begin{tikzpicture}[shorten >=1pt,node distance=2.9cm,auto,font=\scriptsize]
\node[state,font=\normalsize] (w) {$w_2$};
\node[state,font=\normalsize] (r1)[right of=w,node distance=1.3cm] {$r'_2$};
\node[state,font=\normalsize] (t) [right of=r1]{$r_2$};
\node[state,font=\normalsize] (1) [below right of=t] {$r'_1$};
\node[state,font=\normalsize] (2) [right of=1] {$r_1$};
\node[state,font=\normalsize] (s) [above right of=2] {$u_1$};
\node[state,font=\normalsize] (g) [below right of=2] {$t$};
\path[->] 
	(w) edge (r1)
	(r1) edge [bend left] node [sloped,anchor=south] {$c_7 = 1, c_8 = 0$} node [sloped,anchor=north] {$R(c_7, c_8)$} (t)
	(r1) edge [bend right] node [sloped,anchor=north] {$c_7 = 0, c_8 = 1$} node
[sloped,anchor=south] {$R(c_7, c_8)$} (t)
	(t) edge node [sloped,anchor=south] {$c_5 = 1, c_6 = 0$} (s)
	(t) edge node [sloped,anchor=south] {$c_5 = 1, c_6 = 1$} node [sloped,anchor=north] {$R(c_5, c_6)$} (1)
	(1) edge [bend left] node [sloped,anchor=south] {$c_3 = 2, c_4 = 0$} node
[sloped,anchor=north] {$R(c_3, c_4)$} (2)
	(1) edge [bend right] node [sloped,anchor=north] {$c_3 = 0, c_4 = 2$} node
[sloped,anchor=south] {$R(c_3, c_4)$} (2)
	(2) edge node [sloped,anchor=south] {$c_1 = 2, c_2 = 0$} (s)
	(2) edge node [sloped,anchor=south] {$c_1 = 2, c_2 = 2$} node
[sloped,anchor=north] {$R(c_1, c_2)$} (g)
	;
\end{tikzpicture} 
\end{center}
\caption{The reset gadget}
\label{fig:loop}
\end{figure}

\paragraph{\bf The reset gadget.}
The reset gadget, shown in Figure~\ref{fig:loop}, enforces the constraints shown
in Table~\ref{tbl:plays}. The locations~$w_2$ and~$u_1$ represent the same
locations as they did in Figure~\ref{fig:guess}. To simplify the diagram, we
have only included meaningful equality tests. Whenever we omit a required
equality test, it should be assumed that the counter is~$0$. For example, the
outgoing transitions from~$r_2$ implicitly include the requirement
that~$c_7$,~$c_8,$ and~$c_9$ are all~$0$.

We consider the following reachability problem: can $(t, 0, 0, \dots, 0)$ be
reached from $(u_1, 0, 0, \dots, 0)$? The structure of the reset gadget places
restrictions on the runs that reach $t$. All such runs pass through the reset
gadget exactly four times, and the following table describes each pass:
\begin{center}
\setlength{\tabcolsep}{10pt}
\begin{tabular}{c | l}
Pass & Path \\\hline
1 & $w_2 \rightarrow r'_2 \rightarrow r_2 \rightarrow u_1$ \\
2 & $w_2 \rightarrow r'_2 \rightarrow r_2 \rightarrow r'_1 \rightarrow
r_1 \rightarrow u_1$ \\

3 & $w_2 \rightarrow r'_2 \rightarrow r_2 \rightarrow u_1$ \\

4 & $w_2 \rightarrow r'_2 \rightarrow r_2 \rightarrow r'_1
\rightarrow r_1 \rightarrow t$ \\
\end{tabular}
\end{center}
To see why these paths must be taken, observe that, for every~$i \in \{1, 3\}$,
each pass through the play gadget increments either~$c_i$
or~$c_{i+1}$, but not both. This means that the first time that we arrive
at~$r_2$, we must take the transition directly to~$u_1$, because the guard on
the transition to~$r'_1$ cannot possibly be satisfied after a single pass
through the play gadget. When we arrive at $r_2$ on the second pass, we are
forced to take the transition to $r'_1$, because we cannot have $c_5 = 1$ and
$c_6 = 0$ after two passes through the play gadget. This transition resets both
$c_5$ and $c_6$, so the pattern can repeat again on the third and fourth visits
to $r_2$. The location $r_1$ behaves in the same way as $r_2$, but the equality
tests are scaled up, because $r_1$ is only visited on every second pass through
the reset gadget.

We can now see that all strategies of the universal player must be considered.
The transition between~$r_2$ and~$u_1$ forces the play
gadget to increment~$c_5$, and therefore the first and third plays must
include~$A_2$. Similarly, the transition between~$r_2$ and~$r'_1$ forces the
second and fourth plays to include~$B_2$. Meanwhile, the transition between
$r_1$ and $u_1$ forces the first and second plays to include $A_1$, and the
transition between $r_1$ and $t$ forces the third and fourth plays to include
$B_1$. Thus, we select the universal player strategies exactly as
Table~\ref{tbl:plays} prescribes.

The transitions between~$r'_1$ and~$r_1$ check that the existential player is
playing a coherent strategy. When the automaton arrives at~$r'_1$ during the
second pass, it verifies that either~$E_1$ was included in the first and second
plays, or that~$F_1$ was included in the first and second plays. If this is not
the case, then the automaton gets stuck. The counters~$c_3$ and~$c_4$ are reset
when moving to~$r_1$, which allows the same check to occur during the fourth
pass. For the sake of completeness, we have included the transitions between
$r'_2$ and $r_2$, which perform the same check for $E_2$ and $F_2$. However,
since the existential player is allowed to change this decision on every pass,
the automaton can never get stuck at $r'_2$.

The end result is that location~$t$ can be reached if and only if the
existential player has a winning strategy for $(\psi, T)$. As we will show in
the next section, the construction extends to arbitrarily large SSGs,
which then leads to a proof that reachability in counter-stack automata is
PSPACE-hard. Note that all counters in this construction are bounded: $c_9$ is
clearly bounded by the maximum value that can be achieved by a play of the SSG,
and reset gadget ensures that no other counter may exceed~$4$. Thus, we will
have completed our proof of PSPACE-hardness for bounded one-counter automata and
two-clock timed automata.

\section{Formal Definition and Proof}

\paragraph{\bf Sequential strategies for SSGs.}

We start by formalising the ideas behind Table~\ref{tbl:plays}. Recall that the
table gives a strategy for the existential player in the form of a list of
plays. Moreover, the table gave a very specific ordering in which these plays
must appear. We now formalise this ordering.

We start by dividing the integers in the interval $[1, 2^n]$ into
\emph{$i$-blocks}. The $1$-blocks partition the interval into two equally sized
blocks. The first $1$-block consists of the range $[1, 2^{n-1}]$, and the second
$1$-block consists of the range $[2^{n-1} + 1, 2^n]$. There are four $2$-blocks,
which partition the $1$-blocks into two equally sized sub-ranges. This pattern
continues until we reach the $n$-blocks. 

Formally, for each $i \in \{1, 2, \dots, n\}$, then there are $2^{i}$ distinct
$i$-blocks. The set of $i$-blocks can be generated by considering the intervals
$[k + 1, k + 2^{n - i}]$ for the first~$2^i$ numbers $k \ge 0$ that satisfy $k
\bmod 2^{n - i} = 0$. An $i$-block is \emph{even} if $k$ is an even
multiple of $2^{n-i}$, and it is \emph{odd} if $k$ is an odd
multiple of $2^{n-i}$.

The ordering of the plays in Table~\ref{tbl:plays} can be described using
blocks. There are four $2$-blocks, and $A_2$ appears only in even $2$-blocks,
while $B_2$ only appears in odd $2$-blocks. Similarly, $A_1$ only appears in the
even $1$-block, while $B_1$ only appears in the odd $1$-block. The restrictions
on the existential player can also be described using blocks: the existential
player's strategy may not change between $E_i$ and $F_i$ during a $i$-block. We
generalise this idea in the following definition.

\begin{definition}[Sequential strategy]
\label{def:ss}
A sequential strategy for the existential player in $(\psi, T)$ is a list
of $2^n$ plays $\mathcal{S} = P_1, P_2, \dots, P_{2^n}$, where for every
$i$-block $L$ we have:
\begin{itemize}
\item If $L$ is an even $i$-block, then $P_j$ must contain $A_i$ for all $j \in
L$.
\item If $L$ is an odd $i$-block, then $P_j$ must contain $B_i$ for all $j \in
L$.
\item We either have $E_i \in P_j$ for all $j \in L$, or we have $F_i \in P_j$
for all $j \in L$.
\end{itemize}
\end{definition}

We say that $\mathcal{S}$ is winning for the existential player if $\sum P_j =
T$ for every $P_j \in \mathcal{S}$. Since a sequential strategy is simply a
strategy written in the form of a list, we have the following lemma. See
Appendix~\ref{app:ss} for further details.

\begin{lemma}
\label{lem:ss}
The existential player has a winning strategy if and only if the
existential player has a sequential winning strategy.
\end{lemma}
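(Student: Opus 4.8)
The plan is to prove the two directions of the equivalence separately, showing that ordinary strategies and sequential strategies are really the same object viewed in two different ways. The key observation is that the block structure on $[1, 2^n]$ is precisely a binary-tree encoding of the $2^n$ possible sequences of universal moves: each play index $j \in [1, 2^n]$ corresponds to a leaf of the tree, and the $i$-blocks correspond to the internal nodes at depth $i$. Reading off whether each block is even or odd along the path to leaf $j$ recovers a full sequence of universal choices, and this correspondence is a bijection between $[1, 2^n]$ and $\prod_{1 \le j \le n} \{A_j, B_j\}$.

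\paragraph{\bf From a winning strategy to a sequential winning strategy.} First I would fix a winning strategy $\s = (s_1, \dots, s_n)$ for the existential player. For each index $j \in [1, 2^n]$, I would use the bijection above to read off the unique sequence of universal moves $\sigma_j$ that places $A_i$ (resp.\ $B_i$) in round $i$ exactly when the depth-$i$ block containing $j$ is even (resp.\ odd). I would then define $P_j$ to be the unique play in $\plays(\s)$ in which the universal player plays $\sigma_j$ and the existential player responds according to $\s$. Since $\s$ is winning, every play in $\plays(\s)$ sums to $T$, so $\sum P_j = T$ for all $j$. It then remains to verify the three bullet conditions of Definition~\ref{def:ss}. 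The first two are immediate from the construction of $\sigma_j$. For the third, I would argue that any two indices $j, j'$ lying in a common $i$-block agree on the first $i$ universal moves (their paths coincide down to depth $i$), so $s_i$ is applied to the same argument $F_i$ and returns the same element of $\{E_i, F_i\}$; hence $E_i$ or $F_i$ is chosen uniformly across the block.

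\paragraph{\bf From a sequential winning strategy to a winning strategy.} Conversely, given a sequential winning strategy $\mathcal{S} = P_1, \dots, P_{2^n}$, I would define functions $s_i$ by decoding: given the first $i$ universal moves, the block structure determines a unique $i$-block $L$, and the third bullet of Definition~\ref{def:ss} guarantees that all plays $P_j$ with $j \in L$ agree on their $i$-th existential move, so I can set $s_i$ of that argument to be this common value. This is well-defined precisely because of the block-uniformity condition. I would then check that $\plays(\s)$ for this $\s$ is exactly $\{P_1, \dots, P_{2^n}\}$, using the first two bullets to confirm that each play's universal moves match the block to which its index is assigned, and hence every conforming play sums to $T$, making $\s$ winning.

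The main obstacle I anticipate is making the encoding between play indices and universal-move sequences fully rigorous, and in particular verifying that the ``agree on the first $i$ moves'' property holds if and only if two indices share an $i$-block. This requires a careful induction on $i$ showing that the even/odd classification of nested blocks is consistent with a most-significant-bit-first reading of $j - 1$ in binary, so that the first $i$ bits of $j-1$ determine the $i$-block and thereby the first $i$ universal choices. Once this bijection and its prefix-compatibility are established, the two translations above are routine; the bulk of the care is in lining up the combinatorial indexing with the strategy-tree semantics, which is why this lemma is deferred to the appendix.
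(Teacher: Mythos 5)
Your proposal is correct and follows essentially the same approach as the paper's proof: both directions rest on the observation that the $i$-blocks are in prefix-compatible bijection with the universal player's move sequences, so that reordering $\plays(\s)$ by that bijection yields a sequential strategy, and conversely the block-uniformity of the existential choices lets one read off a well-defined strategy $\s$ whose conforming plays all appear in $\mathcal{S}$. The only cosmetic difference is that the paper builds the ordering by an iterative block-refinement procedure rather than your explicit binary encoding of indices.
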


\paragraph{\bf The base automaton.} 

We describe the construction in two steps. Recall, from
Figures~\ref{fig:guess} and~\ref{fig:loop}, that the top counter is used by the
play gadget to store the value of the play, and to test whether the play is
winning. We begin by constructing a version of the automaton that omits the top
counter. That is, if $c_k$ is the top counter,  we modify the play gadget by
removing all increases to~$c_k$, and the equality test for~$c_k$ between~$w_1$
and~$w_2$. We call this the \emph{base} automaton. Later, we will add the
constraints for $c_k$ back in, to construct the \emph{full} automaton.

We now give a formal definition of the base automaton. Throughout this
definition, we keep consistency with the location and counter names used in
Figures~\ref{fig:guess} and~\ref{fig:loop}. For each natural number $n$, we
define a counter-stack automaton $\A_n$ as follows. The automaton has the
following set of locations
\begin{itemize}
\item For each $i \in [1, n]$ we have a location $u_i$ and a location~$e_i$.
\item We have two check states $w_1$ and $w_2$.
\item For each $i \in [1, n]$ we have two reset locations $r_i$ and $r'_i$.
\item We have a goal location $t$.
\end{itemize}
The automaton uses $k = 2n + 1$ counters. The top counter $c_k$ is reserved for
the full automaton, and will not be used in this construction. We will identify
counters $1$ through $2n$ using the following shorthands. For each integer $i$,
we define $a_i = c_{4(i-1) + 1}$, we define $b_i = c_{4(i-1) + 2}$, we define
$e_i = c_{4(i-1) + 3}$, and we define $f_i = c_{4(i-1) + 4}$. Note that, in
Figure~\ref{fig:guess}, we have $a_1 = c_1$ and $a_2 = c_5$, and these are
precisely the counters associated with $A_1$ and $A_2$, respectively. The same
relationship holds between $b_i$ and $B_i$, and so on.

The transitions of the automaton are defined as follows. Whenever we omit a
required equality test against a counter $c_i$, it should be assumed that the
transition includes the test $c_i = 0$.
\begin{itemize}
\item Each location $u_i$ has two transitions to $e_{i}$.
\begin{itemize}
\item A transition that adds $1$ to $a_i$.
\item A transition that adds $1$ to $b_i$.
\end{itemize}
\item We define $u_{n+1}$ to be a shorthand for $w_1$. Each location $e_i$ has
two transitions to $u_{i+1}$.
\begin{itemize}
\item A transition that adds $1$ to $e_i$.
\item A transition that adds $1$ to $f_i$.
\end{itemize}
\item Location $w_1$ has a transition to $w_2$, and $w_2$ has a transition to
$r'_n$. These transitions do not increase any counter, and do not test any
equalities.
\item Each location $r'_i$ has two outgoing transitions to $r_i$.
\begin{itemize}
\item A transition that tests $e_i = 2^{n-i}$ and $f_i = 0$.
\item A transition that tests $e_i = 0$ and $f_i = 2^{n-i}$.
\end{itemize}
\item We define $r'_{0}$ to be shorthand for location $t$. Each location $r_i$
has two outgoing transitions.
\begin{itemize}
\item A transition to $u_1$ that tests $a_i = 2^{n-i}$ and $b_i = 0$.
\item A transition to $r'_{i-1}$ that tests $a_i = 2^{n-i}$ and $b_i = 2^{n-i}$.
\end{itemize}
\end{itemize}

\paragraph{\bf Runs in the base automaton.} 

We now describe the set of runs are possible in the base automaton. We decompose
every run of the automaton into segments, such that each segment contains a
single pass through the play gadget. More formally, we decompose $R$ into
segments $R_1$, $R_2$, $\dots$, where each segment $R_i$ starts at $u_1$, and
ends at the next visit to $u_1$. We say that a run gets \emph{stuck} if the run
does not end at~$(t,0,0,\dots, 0)$, and if the final state of the run has
no outgoing transitions. We say that a run $R$ gets stuck during an $i$-block
$L$ if there exists a $j \in L$ such that $R_j$ gets stuck. The following lemma
gives a characterisation of the runs in $\A_n$. See Appendix~\ref{sec:ssruns}
for further details.

\begin{lemma}
\label{lem:ssruns}
Let $R$ be a run in $\A_n$. $R$ does not get stuck if and only if, for every $i$-block $L$, all of the following hold.
\begin{itemize}
\item If $L$ is an even $i$-block, then $R_j$ must increment $a_i$ for every $j
\in L$.
\item If $L$ is an odd $i$-block, then $R_j$ must increment $b_i$ for every $j
\in L$.
\item Either $R_j$ increments $e_i$ for every $j \in L$, or $R_j$ increments $f_i$ for every $j \in L$.
\end{itemize}
\end{lemma}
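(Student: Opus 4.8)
The plan is to recognise the reset gadget as a binary odometer and to read off, at each guard it evaluates, exactly one of the three block conditions. First I would record the only feature of the play gadget that matters: each pass $R_j$ from $u_1$ to $w_2$ makes, for every level $i$, exactly one of the two choices at $u_i$ (incrementing $a_i$ or $b_i$ by one) and exactly one of the two choices at $e_i$ (incrementing $e_i$ or $f_i$). Consequently every counter always holds the number of segments that made the corresponding choice since it was last reset, and ``$R_j$ increments $a_i$'' is simply the event that the $a_i$-edge was taken on the $j$-th pass. I would also note that the reset gadget is \emph{deterministic}: at $r_i$ the two guards demand $b_i = 0$ and $b_i = 2^{n-i}$ respectively (and both demand $a_i = 2^{n-i}$), while at $r'_i$ they demand $f_i = 0$ and $f_i = 2^{n-i}$, so at most one transition is ever enabled. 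Hence the whole run is a function of the play choices, and ``not getting stuck'' is a property of those choices alone; both directions of the equivalence will follow once I determine, as a function of $j$, which reset locations a non-stuck run visits and which counters it resets there.

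The technical core is an induction on the segment index $j$ establishing this odometer behaviour. I would show that, so long as the run has not yet got stuck, $R_j$ traverses the reset gadget by descending from $r'_n$ through successive levels and then either returning to $u_1$ or, when $j = 2^n$, reaching $t$; more precisely it visits $r'_i$ and $r_i$ exactly for the levels $i$ with $2^{n-i}\mid j$, it takes the exit transition $r_i\to u_1$ at the unique such level (when $j<2^n$) that also satisfies $2^{n-i+1}\nmid j$, and it takes the carrying transition $r_i\to r'_{i-1}$, resetting $a_i$ and $b_i$, exactly at the levels with $2^{n-i+1}\mid j$. The upshot is the reset schedule: $e_i$ and $f_i$ are reset at every visit to $r'_i$, i.e.\ after every segment with $2^{n-i}\mid j$, whereas $a_i$ and $b_i$ are reset only when level $i$ is carried, i.e.\ after every segment with $2^{n-i+1}\mid j$. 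The inductive step is driven by the guards themselves, and here the convention that an omitted test means $=0$ plays a helpful role: by the time $R_j$ reaches $r_i$ it has already carried past levels $n,\dots,i+1$, so every counter above $b_i$ has been reset to $0$ and the implicit $=0$ tests are automatically met; the explicit guard then forces $a_i = 2^{n-i}$, and the value of $b_i$ decides between exit ($b_i=0$) and carry ($b_i=2^{n-i}$), any other value causing the run to get stuck.

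With the reset schedule in hand I would compute the counter values at each test and match them to the three conditions. At the visit to $r'_i$ ending an $i$-block $L$, the counters $e_i$ and $f_i$ have accumulated over exactly the $2^{n-i}$ segments of $L$, since they were reset at the previous visit to $r'_i$ one block earlier; so the guard is satisfiable precisely when every segment of $L$ makes the same $e_i/f_i$ choice, which is the coherence condition. At the visit to $r_i$ ending an \emph{even} $i$-block $L$, the counters $a_i,b_i$ were reset just before $L$ began and hence count only the choices within $L$, so the exit guard $a_i = 2^{n-i}$, $b_i = 0$ is satisfiable exactly when every segment of $L$ increments $a_i$. At the visit ending an \emph{odd} $i$-block $L$, these counters have instead accumulated over $L$ together with the preceding even block, in which $a_i$ already reached $2^{n-i}$, so the carry guard $a_i = 2^{n-i}$, $b_i = 2^{n-i}$ forces every segment of $L$ to increment $b_i$. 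Reading these equivalences across all levels $i$ and all $i$-blocks $L$ --- and checking that when the conditions hold the run indeed reaches $(t,0,\dots,0)$ after exactly $2^n$ segments --- shows that the run never gets stuck if and only if all three block conditions hold, which is the lemma.

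The step I expect to be the main obstacle is the odometer induction of the second paragraph: I must verify that the guards of level $i$ are always tested against precisely the window of segments I claim, which means the induction has to carry not merely the path taken through the gadget but the exact values of every $a_i,b_i,e_i,f_i$ at each test, and has to check that resetting $a_i,b_i$ on a carry re-synchronises the counting for the next level-$i$ visit. Two boundary cases deserve explicit care: level $n$, where $2^{n-n}=1$ makes the $e_n/f_n$ guard pass on every single segment so that its coherence condition is vacuous, and the final segment $j=2^n$, where the run carries through all $n$ levels in turn, resetting every counter, and so arrives at the all-zero target state $t$.
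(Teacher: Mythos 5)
Your proposal is correct and takes essentially the same route as the paper: your ``odometer'' claim (which reset locations $r'_i, r_i$ segment $R_j$ visits, and when the carry transition resets $a_i, b_i$) is exactly the paper's auxiliary Lemma~\ref{lem:runs}, proved there by downward induction on the level $i$ rather than on the segment index $j$, and the subsequent reading of the three block conditions off the guards at $r'_i$ and $r_i$ is the same. Your explicit handling of the implicit $=0$ tests on the higher counters and of the determinism of the reset gadget fills in details the paper leaves tacit.
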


We say that a run is \emph{successful} if it eventually reaches $(t, 0, 0,
\dots, 0)$. By definition, a run is successful if and only if it never gets
stuck. Also, the transition from $r_1$ to $t$ ensures that every successful run
must have exactly $2^n$ segments. With these facts in mind, if we compare
Lemma~\ref{lem:ssruns} with Definition~\ref{def:ss}, then we can see that the
set of successful runs in $\A_n$ corresponds exactly to the set of sequential
strategies for the existential player in the SSG.

Since we eventually want to implement $\A_n$ as a bounded one-counter automaton,
it is important to prove the $\A_n$ is bounded. We do this in the following
Lemma. See Appendix~\ref{sec:bounded} for full details.

\begin{lemma}
\label{lem:bounded}
Along every run of $\A_n$ we have that:
\begin{itemize}
\item $a_i$ and $b_i$ are bounded by $2^{n-i+1}$, and
\item $e_i$ and $f_i$ are bounded by $2^{n-i}$.
\end{itemize}
\end{lemma}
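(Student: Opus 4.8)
The plan is to reduce the boundedness question to a counting argument: every counter is incremented by at most one per pass through the play gadget and is zeroed at a single, specific reset transition, so it suffices to bound how many passes can elapse between consecutive resets. Decomposing the run into segments $R_1,R_2,\dots$ as in the discussion preceding Lemma~\ref{lem:ssruns}, I first observe that each segment traverses every location $u_i$ exactly once (incrementing exactly one of $a_i,b_i$) and every location $e_i$ exactly once (incrementing exactly one of $e_i,f_i$). Since $a_i,b_i$ are zeroed only by the transition $r_i\to r'_{i-1}$, and $e_i,f_i$ only by $r'_i\to r_i$, the quantity $a_i+b_i$ equals the number of segments since $r_i\to r'_{i-1}$ was last taken, and $e_i+f_i$ equals the number of segments since $r'_i\to r_i$ was last taken. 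Thus the bounds will follow immediately once I control how frequently these two reset transitions fire.

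The core step is a downward induction on $i$ showing that the reset gadget behaves like a binary odometer. Call the transition $r_i\to r'_{i-1}$ a \emph{level-$i$ carry}; the descent producing it passes through $r_j\to r'_{j-1}$ and $r'_j\to r_j$ for every $j\ge i$, so immediately after a level-$i$ carry we have $a_j=b_j=e_j=f_j=0$ for all $j\ge i$. I will prove: starting at $u_1$ with $a_j=b_j=e_j=f_j=0$ for all $j\ge i$, the run performs a level-$i$ carry after exactly $2^{\,n-i+1}$ segments, or else gets stuck within $2^{\,n-i+1}$ segments. The base case $i=n$ is direct: the test at $r'_n$ is always satisfied (exactly one of $e_n,f_n$ is set to $1$ each segment), so $r_n$ is reached after every segment, and a two-case analysis of the choices at $u_n$ shows a level-$n$ carry occurs at segment $2$ unless the run is already stuck.

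For the inductive step I would invoke the hypothesis at level $i+1$: level-$(i+1)$ carries occur exactly every $2^{\,n-i}$ segments, each one resetting all higher levels and delivering the descent to $r'_i$. At the first such carry $e_i+f_i=2^{\,n-i}$, so the guard at $r'_i$ is met only if those increments were all $E_i$ (or all $F_i$); then $a_i+b_i=2^{\,n-i}$ forces the only satisfiable outgoing transition at $r_i$ to be the stop $r_i\to u_1$ with $a_i=2^{\,n-i}$, $b_i=0$. The second level-$(i+1)$ carry, $2^{\,n-i}$ segments later, then yields $a_i+b_i=2^{\,n-i+1}$, so the guard at $r_i$ forces $a_i=b_i=2^{\,n-i}$ and fires the level-$i$ carry at segment $2^{\,n-i+1}$; any other choice leaves some guard unsatisfiable and halts the run within the same window. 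Iterating this (each carry restores the all-zero precondition and the run returns to $u_1$), level-$i$ carries fire every $2^{\,n-i+1}$ segments, and $r'_i$-resets fire at every level-$(i+1)$ carry, i.e. every $2^{\,n-i}$ segments. Combining with the counting observation gives $a_i,b_i\le a_i+b_i\le 2^{\,n-i+1}$ and $e_i,f_i\le e_i+f_i\le 2^{\,n-i}$.

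The main obstacle is carrying out the inductive step rigorously for \emph{every} run rather than just those that remain on the odometer: I must check that each way of deviating from the forced choice at $u_i$, $e_i$, $r'_i$, or $r_i$ makes a guard unsatisfiable and halts the run inside the $2^{\,n-i+1}$-segment window, so no counter can be pushed past its stated bound before the run stops. The bookkeeping is routine once $a_i+b_i$ and $e_i+f_i$ are recognised as exact segment counters, but some care is needed to confirm that a \emph{stuck} run still respects the bound; the worst case is $2^{\,n-i}$ spurious increments of $a_i$ after a level-$i$ stop, and it is precisely this possibility that makes the correct bound $2^{\,n-i+1}$ rather than $2^{\,n-i}$.
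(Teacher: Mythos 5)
Your proposal is correct and takes essentially the same route as the paper: the paper derives the bounds from its auxiliary Lemma~\ref{lem:runs} (proved by the same downward induction on $i$, showing $r'_i,r_i$ are visited exactly in segments $j$ with $j \bmod 2^{n-i}=0$ and the carry $r_i\to r'_{i-1}$ exactly when $j\bmod 2^{n-i+1}=0$), combined with the observation that each counter is incremented at most once per segment and reset at the corresponding transition. You simply inline that odometer lemma rather than citing it, and your explicit remark that stuck runs are what make the $a_i,b_i$ bound $2^{n-i+1}$ rather than $2^{n-i}$ is a correct and slightly more careful reading of the same argument.
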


\paragraph{\bf The full automaton.}

Let $(\psi, T)$ be an SSG instance, where $\psi$ is:
\begin{equation*}
\forall \; \{A_1, B_1\} \; \exists \; \{E_1, F_1\} \; \dots \; \forall \; \{A_n,
B_n\} \; \exists \; \{E_n, F_n\}.
\end{equation*} 
We will construct a counter-stack automaton $\A_\psi$ from $\A_n$. Recall that
the top counter $c_k$ is unused in $\A_n$. We modify the transitions of $\A_n$
as follows. Let $\delta$ be a transition. If $\delta$ increments $a_i$ then it
also adds $A_i$ to $c_k$, if $\delta$ increments $b_i$ then it also adds $B_i$
to $c_k$, if $\delta$ increments $e_i$ then it also adds $E_i$ to $c_k$, and if
$\delta$ increments $f_i$ then it also adds $F_i$ to $c_k$. We also modify the
transition between $w_1$ and $w_2$, so that it checks whether $c_k = T$, and
resets $c_k$.

Since we only add extra constraints to $\A_n$, the set of successful runs in
$\A_\psi$ is contained in the set of successful runs of $\A_n$. Recall that the
set of successful runs in $\A_n$ encodes the set of sequential strategies for
the existential player in $(\psi, T)$. In $\A_\psi$, we simply check whether
each play in the sequential strategy is winning for the existential player.
Thus, we have shown the following lemma.

\begin{lemma}
The set of successful runs in $\A_\psi$ corresponds precisely to the set of
winning sequential strategies for the existential player in $(\psi, T)$.
\end{lemma}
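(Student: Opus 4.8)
The plan is to establish a bijective correspondence between the set of successful runs in $\A_\psi$ and the set of winning sequential strategies, building directly on the two facts already assembled in the excerpt: first, that successful runs in the base automaton $\A_n$ correspond exactly to sequential strategies (obtained by comparing Lemma~\ref{lem:ssruns} with Definition~\ref{def:ss}), and second, that $\A_\psi$ differs from $\A_n$ only by adding increments to $c_k$ and an equality test $c_k = T$ on the transition from $w_1$ to $w_2$. Since every transition of $\A_\psi$ that modifies $c_1, \dots, c_{2n}$ does so identically to the corresponding transition in $\A_n$, the projection of any run of $\A_\psi$ onto the lower counters is a valid run of $\A_n$, and it is successful in $\A_\psi$ exactly when it is successful in $\A_n$ \emph{and} it additionally clears the $c_k = T$ test on each pass. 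This reduces the lemma to tracking what value $c_k$ holds at each visit to $w_1$.

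First I would fix a successful run $R$ of $\A_\psi$ and decompose it into its $2^n$ segments $R_1, \dots, R_{2^n}$ as in the text. By the correspondence for $\A_n$, this decomposition determines a sequential strategy $\mathcal S = P_1, \dots, P_{2^n}$, where the $j$th play $P_j$ records, for each round $i$, which of the four edges (contributing $A_i$, $B_i$, $E_i$, or $F_i$) was taken in segment $R_j$. Next I would verify the key bookkeeping claim: because $c_k$ is reset to $0$ on the transition from $w_1$ to $w_2$ at the end of each segment, and is incremented only by the amounts $A_i, B_i, E_i, F_i$ dictated by the edges chosen within a single segment, the value of $c_k$ when the automaton reaches $w_1$ during segment $R_j$ equals exactly $\sum P_j$. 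Here I must be careful about the reset gadget: the passes through the reset locations between consecutive segments touch only the counters $c_1, \dots, c_{2n}$ and never increment $c_k$, so they do not disturb the running total, and the reset at the end of each segment guarantees each segment starts its accumulation from zero.

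Given this claim, the equivalence falls out cleanly. The added test forces the transition $w_1 \to w_2$ in segment $R_j$ to be enabled if and only if $c_k = T$ at that point, i.e.\ if and only if $\sum P_j = T$. Hence $R$ is a successful run of $\A_\psi$ precisely when it is a successful run of $\A_n$ (so $\mathcal S$ is a sequential strategy) and $\sum P_j = T$ for every $j$ (so $\mathcal S$ is winning). Conversely, given a winning sequential strategy $\mathcal S$, the corresponding successful run of $\A_n$ lifts to a run of $\A_\psi$ in which every $w_1 \to w_2$ test is satisfied, yielding a successful run of $\A_\psi$. I would conclude by noting this correspondence is exactly the one claimed.

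The main obstacle I anticipate is not conceptual difficulty but the care needed in the bookkeeping claim about $c_k$: one must confirm that within a segment each of the four rounds contributes its chosen number to $c_k$ exactly once (this is where the play gadget's structure, one transition per round, matters), that no stray increment to $c_k$ occurs in the reset gadget, and that the reset of $c_k$ happens at the right moment so that segments do not contaminate one another's totals. Since the boundedness of $c_k$ (needed later to implement $\A_\psi$ as a bounded one-counter automaton) follows from the remark that $c_k$ is bounded by the maximum achievable play value together with Lemma~\ref{lem:bounded} for the remaining counters, I would treat that as orthogonal to the present correspondence and not belabour it here.
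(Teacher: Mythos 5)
Your proposal is correct and follows essentially the same route as the paper: the paper argues that $\A_\psi$ only adds constraints to $\A_n$, so its successful runs are exactly those successful runs of $\A_n$ (i.e.\ sequential strategies) in which every pass additionally satisfies the $c_k = T$ test, which is precisely the winning condition. You simply spell out the bookkeeping for $c_k$ (accumulation within a segment, reset at $w_1 \to w_2$, no interference from the reset gadget) that the paper leaves implicit.
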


We also have that $\A_\psi$ is bounded. Counters $c_1$ through $c_{k-1}$ are
bounded due to Lemma~\ref{lem:bounded}, and counter~$c_k$ is bounded by $\sum
\{A_i, B_i, E_i, F_i \; : \; 1 \le i \le n\}$. This completes the reduction from
subset-sum games to bounded counter-stack automata, and gives us our main
theorem. 

\begin{theorem}
Reachability in bounded counter-stack automata is PSPACE-hard.
\end{theorem}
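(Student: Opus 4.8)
The plan is to reduce the subset-sum game problem, which is PSPACE-complete by Lemma~\ref{lem:qss}, to reachability in bounded counter-stack automata. Given an SSG instance $(\psi, T)$ whose formula $\psi$ has $n$ rounds, I would take the automaton $\A_\psi$ constructed above, declare its initial location to be $u_1$ (so that the start state is $(u_1, 0, 0, \dots, 0)$), and pose the reachability question for the target state $(t, 0, 0, \dots, 0)$. The argument then amounts to assembling the lemmas already established for $\A_\psi$ into a single chain of equivalences, together with a check that the construction is carried out in polynomial time.

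For correctness I would argue the following chain. By the lemma immediately preceding the theorem, the successful runs of $\A_\psi$ --- that is, the runs reaching $(t, 0, 0, \dots, 0)$ --- are in exact correspondence with the winning sequential strategies for the existential player in $(\psi, T)$. By Lemma~\ref{lem:ss}, a winning sequential strategy exists if and only if the existential player has a winning strategy in the ordinary sense. Composing these two facts, $(t, 0, 0, \dots, 0)$ is reachable from $(u_1, 0, 0, \dots, 0)$ in $\A_\psi$ if and only if the existential player wins $(\psi, T)$, which yields both directions of the reduction at once.

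For the reduction to be legitimate I must also confirm that $\A_\psi$ is a \emph{bounded} counter-stack automaton and that it is produced in polynomial time. Boundedness is already in hand: Lemma~\ref{lem:bounded} bounds each of $c_1, \dots, c_{2n}$ by at most $2^{n}$, and the top counter $c_k$ is bounded by $\sum\{A_i, B_i, E_i, F_i : 1 \le i \le n\}$, so I can fix a single global bound $b$ equal to the maximum of these quantities. For polynomiality I would note that $\A_\psi$ has only $O(n)$ locations and $2n + 1$ counters, so the sole remaining worry is the magnitude of the numeric constants. The reset gadget uses guards such as $2^{n-i}$, which are exponentially large as integers but have bit-length at most $n$; writing these, the increments $A_i, B_i, E_i, F_i$, and the bound $b$ in binary keeps the total description of $\A_\psi$ polynomial in the size of $(\psi, T)$.

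The heavy lifting has, in effect, already been discharged by Lemma~\ref{lem:ssruns}, which characterises precisely which runs of the base automaton avoid getting stuck; once that characterisation and the exact-length condition (the transition from $r_1$ to $t$ forcing $2^n$ segments) are in place, the theorem is essentially a bookkeeping step. If I were proving it from scratch, the genuine obstacle would be exactly that characterisation --- showing that the interplay between the block-structured guards $2^{n-i}$ of the reset gadget and the per-edge counters of the play gadget forces a run to traverse its $2^n$ segments in precisely the block order of Definition~\ref{def:ss}, neither more nor fewer. The subtle point I would keep an eye on throughout is the binary (rather than unary) encoding of these exponential guards, since it is exactly what makes the reduction polynomial rather than exponential.
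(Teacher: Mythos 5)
Your proposal matches the paper's argument: both reduce from subset-sum games via the automaton $\A_\psi$, compose the correspondence between successful runs and winning sequential strategies with Lemma~\ref{lem:ss}, and invoke Lemma~\ref{lem:bounded} plus the explicit bound on the top counter for boundedness. Your additional remarks on the binary encoding of the guards $2^{n-i}$ and the polynomial size of the construction are a correct (and welcome) elaboration of a point the paper leaves implicit.
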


\begin{corollary}
We have:
\begin{itemize}
\item Reachability in bounded one-counter automata is PSPACE-complete.
\item Reachability in $2$-clock timed automata is PSPACE-complete.
\end{itemize}
\end{corollary}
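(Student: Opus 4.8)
The plan is to assemble the corollary from the chain of reductions and membership results already in place, since the corollary is essentially a bookkeeping step that transfers PSPACE-hardness along polynomial-time (and log-space) reductions while separately pinning down PSPACE membership. Each of the two bullet points splits into a hardness half and a membership half.

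For the first item, reachability in bounded one-counter automata, I would handle hardness by chaining the immediately preceding theorem, which states that reachability in bounded counter-stack automata is PSPACE-hard, with Lemma~\ref{lem:bcsa2boca}, which provides a polynomial-time reduction from bounded counter-stack reachability to bounded one-counter reachability. Because PSPACE is closed under polynomial-time many-one reductions, PSPACE-hardness transfers, so reachability in bounded one-counter automata is PSPACE-hard. For membership, I would invoke the known fact that reachability in bounded one-counter automata lies in PSPACE, established in the study of one-clock timed automata with energy constraints~\cite{PFGMS08}. Hardness together with membership yields PSPACE-completeness.

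For the second item, I would transfer completeness across the log-space equivalence between bounded one-counter automata and two-clock timed automata established in~\cite{HOW12}. A log-space equivalence supplies reductions in both directions, and these are in particular polynomial-time reductions; hence the PSPACE-hardness just obtained for bounded one-counter automata carries over to two-clock timed automata. For membership, I would note that reachability in general timed automata is in PSPACE by Alur and Dill~\cite{AD94}, so the two-clock special case is as well (equivalently, membership follows by reducing two-clock timed automata back to bounded one-counter automata through the same equivalence). Combining the two halves gives PSPACE-completeness for two-clock timed automata.

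I do not expect a genuine obstacle here: the substantive mathematical content lives entirely in the earlier theorem and lemmas, and the corollary merely collects them. The only points needing care are that every reduction in the chain runs in at most polynomial time, so that \emph{hardness} is genuinely preserved at each step, and that the two PSPACE \emph{membership} claims are correctly attributed rather than re-derived from scratch.
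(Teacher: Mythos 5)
Your proposal is correct and matches the paper's (implicit) argument exactly: the corollary is obtained by chaining the PSPACE-hardness theorem for bounded counter-stack automata with Lemma~\ref{lem:bcsa2boca}, transferring to two-clock timed automata via the log-space equivalence of~\cite{HOW12}, and citing~\cite{PFGMS08} and~\cite{AD94} for the membership halves. No gaps.
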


\bibliographystyle{plain}
\bibliography{references}

\appendix 
\newpage

\section{Proof of Lemma~\ref{lem:qss}}
\label{app:qss}

\paragraph{\bf Outline.} 
A quantified version of subset-sum has already been shown to be
PSPACE-hard~\cite{T11}, and the proof easily carries over for the case of SSGs.
For the sake of completeness, we provide a direct proof that SSGs are
PSPACE-hard, which closely follows the ideas laid out in~\cite{T11}.

The proof follows the NP-hardness proof for subset-sum, taken from
\cite{CLRS09}[Theorem 34.10]. The key observation is that, if we begin with a
quantified version of 3-SAT, then we end up with an SSG.

\paragraph{\bf Subset-sum is NP-hard.} We now give a summary of the
NP-hardness proof given in \cite{CLRS09}[Theorem 34.10]. We will describe
the reduction using a worked example taken from~\cite{CLRS09}. Consider the
following 3-CNF formula:
\begin{align*}
\phi &= C_1 \land C_2 \land C_3 \land C_4 \\
C_1 &= (x_1 \lor \lnot x_2 \lor \lnot x_3) \\
C_2 &= (\lnot x_1 \lor \lnot x_2 \lor \lnot x_3) \\
C_3 &= (\lnot x_1 \lor \lnot x_2 \lor x_3) \\
C_4 &= (x_1 \lor x_2 \lor x_3)
\end{align*}
This formula has three variables, $x_1$, $x_2$, and $x_3$, and four clauses,
$C_1$ through $C_4$. The reduction assumes that there is no clause $C_i$ that
contains both $x_i$ and $\lnot x_i$, because otherwise $C_i$ would be always be
satisfied. 

The reduction constructs a subset-sum instance, which is described in
the following table:
\begin{center}
\begin{tabular}{rrrrrrrrr}
&& \; $x_1$ & \; $x_2$ & \; $x_3$ & \; $C_1$ & \; $C_2$ & \; $C_3$ & \; $C_4$ \\ \hline
$v_1$ & = & 1 & 0 & 0 & 1 & 0 & 0 & 1 \\
$v'_1$ & = & 1 & 0 & 0 & 0 & 1 & 1 & 0 \\ \hline
$v_2$ & = & 0 & 1 & 0 & 0 & 0 & 0 & 1 \\
$v'_2$ & = & 0 & 1 & 0 & 1 & 1 & 1 & 0 \\ \hline
$v_3$ & = & 0 & 0 & 1 & 0 & 0 & 1 & 1 \\
$v'_3$ & = & 0 & 0 & 1 & 1 & 1 & 0 & 0 \\ \hline
$s_1$ & = & 0 & 0 & 0 & 1 & 0 & 0 & 0 \\
$s'_1$ & = & 0 & 0 & 0 & 2 & 0 & 0 & 0 \\ 
$s_2$ & = & 0 & 0 & 0 & 0 & 1 & 0 & 0 \\
$s'_2$ & = & 0 & 0 & 0 & 0 & 2 & 0 & 0 \\ 
$s_3$ & = & 0 & 0 & 0 & 0 & 0 & 1 & 0 \\
$s'_3$ & = & 0 & 0 & 0 & 0 & 0 & 2 & 0 \\ 
$s_4$ & = & 0 & 0 & 0 & 0 & 0 & 0 & 1 \\
$s'_4$ & = & 0 & 0 & 0 & 0 & 0 & 0 & 2 \\ \hline
$t$ & = & 1 & 1 & 1 & 4 & 4 & 4 & 4 
\end{tabular}
\end{center}
Each row should be read as a number written in decimal. For example, the first
row specifies the number $v_1 = 1001001$. The subset-sum instance asks whether
there is a subset of rows~$v_1$ through~$s'_4$ that sums to row~$t$.

The table is constructed according to the following rules. Each column is
labelled: the first three columns are labelled by the variables~$x_1$
through~$x_3$, and the rest of the columns are labelled by the clauses~$C_1$
through~$C_4$.  For each variable $x_i$ we define two rows:
\begin{itemize}
\item $v_i$ has a~$1$ in column $x_i$, and a $1$ in every column $C_i$
that contains~$x_i$.
\item $v'_i$ has a~$1$ in column $x_i$, and a $1$ in every column $C_i$ that contains~$\lnot x_i$.
\end{itemize}
In addition to these, for each clause $C_i$ we define two \emph{slack} rows: the
row $s_i$ has a $1$ in column $C_i$, and the row $s'_i$ has a $2$ in column
$C_i$.

To see that this reduction works, suppose that we know a satisfying assignment
of the CNF formula. We can use this to construct a solution to the subset-sum
instance. If $x_i$ is true in the satisfying assignment, then we select $v_i$,
and if it is false then we select $v'_i$. In doing so, we construct a subset
with the following properties:
\begin{itemize}
\item For each column $x_i$, we have that the sum of that column is $1$, because
we never select both $v_i$ and $v'_i$.
\item For each column $C_i$, we have that the sum of that column is at least
$1$, because every clause must be satisfied.
\item For each column $C_i$, we have that the sum of that column is at most $3$,
because each clause contains exactly~$3$ variables.
\end{itemize}
These properties ensure that, for each column~$C_i$, we can always select a
subset of the slack columns,~$s_i$ and~$s'_i$, so that the sum of the column
is~$4$. Thus, every satisfying assignment of the CNF formula corresponds to a
solution of the subset-sum instance.

For similar reasons, every solution of the subset-sum instance corresponds to a
satisfying assignment of the CNF formula, by simply ignoring the slack rows.
Since every column $C_i$ must sum to $4$, we know that after removing the
slacks, each column must sum to at least $1$. This, combined with the fact that
$v_i$ and $v'_i$ cannot be selected at the same time, implies that we have a
satisfying assignment for the CNF formula.

See \cite{CLRS09} for a full proof correctness of the NP-hardness reduction.

\paragraph{\bf Changing the format.} 
Our definition of an SSG requires a very specific format for the input instance.
In particular, each quantifier is associated with exactly two natural numbers.
However, the reduction that we have described can be written down very naturally
as a one-player SSG, in which only the existential player is allowed to move.
For our example, the instance is $(V S_1 S_2 S_3 S_4, t)$, where:
\begin{align*} 
V &= \exists \; \{v_1, v'_1\} \; \exists \; \{v_2, v'_2\} \; \exists \; \{v_3,
v'_3\}, \\
S_i &= \exists \; \{s_i, 0\} \; \exists \; \{s'_i, 0\}.
\end{align*}
Note that it is valid to force the choice between $v_i$ and $v'_i$, because no
solution of the subset-sum instance can contain both of these numbers.

\paragraph{\bf Subset-sum games are PSPACE-complete.} It is now easy to reduce a
quantified boolean formula to an SSG. We simply follow the existing reduction,
but if variable $x_i$ is universally quantified, then we use $\forall \{v_i,
v'_i\}$ rather than $\exists \{v_i, v'_i\}$. For example, if we consider the
quantified boolean formula $\forall x_1 \exists x_2 \forall x_3 \; \phi$, where
$\phi$ is defined as before, then we produce the quantified subset-sum instance
$(V' S_1 S_2 S_3 S_4, t)$, where:
\begin{equation*}
V' = \forall \{v_1, v'_1\} \; \exists \{v_2, v'_2\} \; \forall \{v_3, v'_3\}, \\
\end{equation*}
and $S_i$ is defined as before.
The final step is to ensure a strict alternation of quantifiers, which the
definition of an SSG requires. This can easily be achieved by inserting
``dummy'' quantifiers, where necessary. That is, we can insert $\exists \{0,
0\}$ between two consecutive $\forall$ quantifiers, and we can insert $\forall
\{0, 0\}$ between two consecutive $\exists$ quantifiers. This change obviously
cannot affect the winner of the SSG.

\section{Proof of Lemma~\ref{lem:bcsa2boca}}
\label{app:bcsa2boca}

Let $\mathcal{S} = (L, C, \Delta, l_0)$ be a $b$-bounded counter-stack
automaton. Without loss of generality, we will assume that $b = 2^{n} - 1$,
which means that each counter in $S$ is $n$~bits wide. We will construct a
bounded one-counter automaton $\mathcal{B} = (L', b', \Delta', l'_0)$ that
simulates~$\mathcal{S}$. We will refer to the counters of $\mathcal{S}$ as $c_1$
through $c_k$, and the counter of~$\mathcal{B}$ as~$c$.

We will follow the approach laid out at the start of Section~\ref{sec:stack}.
That is, we will set the bound $b' = 2^{k \cdot n} - 1$ so that~$c$ is $k \cdot
n$ bits wide. We then partition these bits in order to implement the
counters~$c_1$ through~$c_k$. The counter $c_k$ will use the~$n$ most
significant bits, the counter $c_{k-1}$ will use the next~$n$ most significant
bits, and so on.

We introduce some notation to formalise this encoding. Let $x \in [0, b]$ be  a
counter value for counter $c_i$. We define $\enc(x, i) = x \cdot 2^{(i-1) \cdot
n}$. To understand this definition, note that for $i = 1$, we have $\enc(x, i) =
x$. Then, for $i = 2$, we have that $\enc(x, i)$ is the value of $x$ bit-shifted
to the left $n$~times. Thus, this definition simply translates $x$ to the
correct position in $c$.

We can now define the translation. We will set $L' = L$ and $l'_0 = l_0$, which
means that both automata have the same set of locations, and the same start
location. We will use the transitions in $\Delta'$ to simulate $S$. For each
transition $t = (l, E, I, R, l') \in \Delta$, we construct a transition
$t' = (l, p, g_1, g_2, l') \in \Delta'$ between the same pair of locations. We
want to have the following property: transition $t$ can be used from a state
$(l, c_1, c_2, \dots, c_k)$ in $\mathcal{S}$ if and only if transition $t'$ can
be used from the state $(l, \sum_{i} \enc(c_i, i))$ in $\mathcal{B}$.

We begin by defining $p$. We set:
\begin{equation*}
p = \sum_{i \notin R} \enc(I_i, i) - \sum_{i \in R} \enc(E(i), i).
\end{equation*}
In other words, for each counter $i \notin R$ that is not to be reset, we add
$\enc(I_i, i)$ to $c$, which correctly adds $I_i$ to $c_i$. Note that the
boundedness assumption on $\mathcal{S}$ implies that the counters can never
overflow due to this operation. For the counters $i \in R$, we subtract $E(i)$
from $c_i$. Recall that $E(i)$ must always be defined for the indices $i \in R$.
Furthermore, the transition may only be taken if~$c_i = E(i)$. Thus,
subtracting~$E(i)$ from~$c_i$ will correctly set it to~$0$.

Next we define the inequality tests. Let $j$ be the smallest index for which
$E(j)$ is defined. Our guards are:
\begin{align*}
g_1 &= \sum_{i \ge j} \enc(E(i), i), \\ 
g_2 &= \sum_{i \ge j} \enc(E(i), i) + \enc(1, j) - 1.
\end{align*}
It is straightforward to show that, in our encoding scheme, we have $c_i = E(i)$
for all $i \ge j$ if and only if $g_1 \le c \le g_2$.

If we are given a target state $s = (t, c_1, c_2, \dots, c_k)$ for
$\mathcal{S}$, then we can translate it into a target state $s' = (t, \sum_{i}
\enc(c_i, i))$ for $\mathcal{B}$. The equivalence between the transitions
in~$\Delta$, and the transitions in~$\Delta'$ implies that~$s$ can be reached
from $(l_0, 0, 0, \dots, 0)$ if and only if~$s'$ can be reached from $(l'_0,
0)$. This completes the proof of Lemma~\ref{lem:bcsa2boca}.

\section{Proof of Lemma~\ref{lem:ss}}
\label{app:ss}

Let $\s = (s_1, s_2, \dots, s_n)$ be a winning strategy for the existential
player. We define a sequential winning strategy as follows. Recall that
$\plays(\s)$ contains exactly $2^n$ plays. We argue that these plays can be
ordered so that they form a sequential strategy. We give an iterative procedure
that achieves this task: the first step of the procedure will ensure that the
$1$-blocks contain the correct plays, the second step will ensure that the
$2$-blocks contain the correct plays, and so on. In the first step, we observe
that exactly $2^{n-1}$ of the plays contain $A_1$, while exactly $2^{n-1}$ of
the plays contain $B_1$, so we can order the plays so that the even $1$-block
contains all plays containing $A_1$. Now suppose that we have found the
$i$-blocks. We observe that each $i$-block $L$ has exactly $2^{n-(i+1)}$ plays
that contain $A_{i+1}$. Therefore, for each $i$-block $L$, we can order the
plays in $L$ so that the even $(i+1)$-block has all plays that contain
$A_{i+1}$, and the odd $(i+1)$-block has all plays that contain $B_{i+1}$. At
the end of this procedure, we will have a list of plays $\mathcal{S} = P_1, P_2,
\dots, P_{2^n}$ where:
\begin{itemize}
\item $P_j$ contains $A_i$ whenever $j$ is in an even $i$-block.
\item $P_j$ contains $B_i$ whenever $j$ is in an odd $i$-block.
\end{itemize}
So $\mathcal{S}$ satisfies the first two conditions of Definition~\ref{def:ss}.
We argue that $\mathcal{S}$ also satisfies the third condition. Let $L$ be an
$i$-block. By definition, for every $j < i$, there is a unique $j$-block that
contains $L$. These blocks define a play prefix $F \in \Pi_{1 \le j \le i}
\{A_i, B_i\}$, and, for each play $P_j$ with $j \in L$, we have $F \subseteq
P_j$. Since $\mathcal{S}$ is a reordering of $\plays(\s)$, we must have $s_i(F)
\in P_j$ for every $j \in L$. Hence, $\mathcal{S}$ satisfies
Definition~\ref{def:ss}. Moreover, since $\s$ is winning, we have that every
play in $\plays(\s)$ is winning, and therefore $\mathcal{S}$ is a sequential
winning strategy.

Now let $\mathcal{S} = P_1, P_2, \dots, P_{2^n}$ be a winning sequential
strategy. We give a high level description of a winning strategy for the SSG. At
the start of the strategy we set $L_0 = [1, 2^n]$. In each round $i$ of the
game, let $D_i \in \{A_i, B_i\}$ be the decision made by the universal player.
We select $L_i$ to be the unique $i$-block in $L_{i-1}$ such that $D_i \in P_j$
for all $j \in L_{i}$. We play $E_i$ if $E_i \in P_j$ for all $j \in L_{i}$, and
we play $F_i$ if $F_i \in P_j$ for all $j \in L_i$. It is straightforward to
encode this strategy in the form $\s = (s_1, s_2, \dots, s_n)$. By construction,
when we play $\s$, the outcome of the game will be some play $P_j$ from
$\mathcal{S}$. Since every play $P_j$ in $\mathcal{S}$ is winning for the
existential player, we have that $\s$ is a winning strategy.

\section{Proof of Lemma~\ref{lem:ssruns}}
\label{sec:ssruns}

Let $R$ be a run in $\mathcal{A}_n$. The following lemma describes the set of
reset states that each segment of $R$ must pass through.

\begin{lemma}
\label{lem:runs}
Let $R$ be a run in $\A_n$. Either:
\begin{itemize}
\item $R_j$ visits precisely the reset locations $\{r'_i, r_i\}$ for which $j \bmod 2^{n-i} = 0$, or
\item $R_j$ gets stuck.
\end{itemize}
\end{lemma}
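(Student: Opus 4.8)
The plan is to prove the statement by induction on the segment index $j$, maintaining a precise invariant on the counter values, and to analyse each segment by an inner downward induction on the levels $i = n, n-1, \dots, 1$ of the reset gadget. The central quantities are the sums $a_i + b_i$ and $e_i + f_i$. Since the play gadget visits every location $u_i$ and $e_i$ exactly once per segment, each segment increments $a_i + b_i$ by exactly one and $e_i + f_i$ by exactly one; the only way these sums decrease is through the resets $R(a_i, b_i)$ on the transition $r_i \to r'_{i-1}$ and $R(e_i, f_i)$ on the transition $r'_i \to r_i$.

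First I would record the invariant that carries through the induction: assuming that no segment among $R_1, \dots, R_{j-1}$ has got stuck, after segment $R_{j'}$ completes we have $a_i + b_i = j' \bmod 2^{n-i+1}$ for every level $i$, and $e_i = f_i = 0$ for every level that $R_{j'}$ reached. Granting this, at the moment $R_j$ enters the reset gadget we have $a_i + b_i = (j-1 \bmod 2^{n-i+1}) + 1$, and at each level it reaches, $e_i + f_i$ equals the number of segments since level $i$ was last reached, which the induction pins down as exactly $2^{n-i}$ whenever $2^{n-i} \mid j$ (the last reach was segment $j - 2^{n-i}$, which reset $e_i, f_i$, and the intervening $2^{n-i}$ segments each added one).

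The core of the argument is the inner induction walking down the reset gadget. At $r'_i$ the guard forces one of $e_i, f_i$ to equal $2^{n-i}$ with the other zero; since $e_i + f_i = 2^{n-i}$, this either succeeds, resetting $e_i, f_i$ and moving to $r_i$, or the segment is stuck, which is the second alternative of the lemma. At $r_i$ both outgoing guards demand $a_i = 2^{n-i}$, so the segment can continue only if $a_i + b_i \in \{2^{n-i}, 2^{n-i+1}\}$ and is stuck otherwise. Reading $a_i + b_i = j \bmod 2^{n-i+1}$ (interpreted as $2^{n-i+1}$ when the residue is $0$), I would show that $R_j$ exits to $u_1$ exactly when $2^{n-i} \mid j$ but $2^{n-i+1} \nmid j$, and descends to $r'_{i-1}$, resetting $a_i, b_i$, exactly when $2^{n-i+1} \mid j$. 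Because reaching $r'_{i-1}$ is the same event as descending from $r_i$, this says precisely that $R_j$ reaches level $i-1$ iff $2^{n-(i-1)} \mid j$, which closes the recursion. The first level $m$ at which the segment exits then satisfies $n - m = v$, where $2^v$ is the exact power of two dividing $j$, so the reset locations visited are exactly $\{\, r'_i, r_i : 2^{n-i} \mid j \,\}$, with the final segment $j = 2^n$ descending all the way through $r_1$ to $r'_0 = t$.

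The step I expect to be the main obstacle is keeping the two inductions honest simultaneously: whether $R_j$ reaches a given level $i$ depends on resets performed in earlier segments (the across-segment accumulation), while the descent itself happens within $R_j$, so the closed form for $a_i + b_i$ must be established in lockstep with the reset pattern it is used to prove. I would handle this by folding everything into a single strong induction on $j$ whose hypothesis contains both the visiting pattern and the closed forms for all the sums, so that the edge case ``residue $0$ means the block is full rather than empty'' and the bookkeeping for $e_i + f_i$ are available exactly where the inner induction needs them.
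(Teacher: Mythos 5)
Your proof is correct; it reaches the paper's conclusion by a transposed induction. The paper inducts on the level $i$: the base case $i=n$ is immediate, and the inductive step uses the already-established pattern of visits to $r_{i+1}$ to track the sum of the counters tested there across its first, second, \dots{} visits, concluding that the descent to $r'_i$ can be taken (without getting stuck) only on every second such visit; no closed-form invariant is ever written down. You instead run a strong induction on the segment index $j$, carrying the explicit invariants $a_i+b_i=j\bmod 2^{n-i+1}$ (with the residue-$0$-means-full convention) and $e_i+f_i=2^{n-i}$ at each reached level, and resolve each segment by an inner descent through the levels, identifying the exit level via the largest power of two dividing $j$. Both arguments rest on the same two facts---each segment adds exactly one to $a_i+b_i$ and to $e_i+f_i$, and the guards at $r_i$ and $r'_i$ pin these sums to $\{2^{n-i},2^{n-i+1}\}$ and to $2^{n-i}$ respectively---so the mathematical content is the same. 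The paper's organization is lighter, needing only the relative count between consecutive visits to $r_{i+1}$; yours incurs the bookkeeping you anticipate, but the explicit closed form for $a_i+b_i$ and the reset times for $e_i,f_i$ hand you Lemma~\ref{lem:bounded} essentially for free. Two small points to make explicit in a full write-up: a stuck segment is by definition the last segment of the run, so your hypothesis that no earlier segment got stuck holds automatically whenever $R_j$ exists; and getting stuck at $r'_i$ (when $e_i+f_i=2^{n-i}$ but neither counter is zero) or at $r_i$ (when $a_i\neq 2^{n-i}$) is exactly the second alternative of the lemma, so you need not characterize when it occurs.
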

\begin{proof}
We will prove this lemma by induction over $i$. The base case, where $i = n$, is
trivial because $j \bmod 2^{n-n}$ is always equal to $0$, and it is clear from
the construction that every segment $R_j$ must always visit both $r'_n$ and
$r_n$.

For the inductive step, suppose that the lemma has been shown for $i + 1$, and
will show that the lemma holds for $i$. We know that, in order to reach $r'_i$
or $r_i$, a segment must first visit $r'_{i+1}$. By the inductive hypothesis, we
know that only segments $R_j$ with $j \bmod 2^{n - (i + 1)}$
visit~$r_{i+1}$. At the start of $R$, we have $a_i = b_i = 0$. On the first
visit to~$r_{i+1}$, we clearly cannot take the transition to $r'_i$, because we
have $a_i + b_i = 2^{n-(i+1)}$, and the transition to $r'_i$ requires $a_i + b_i
= 2^{n-i}$. Thus, we either have to take the transition to~$u_1$, or we get
stuck. On the second visit to~$r_{i+1}$, we cannot take the transition to~$u_1$,
because we have $a_i + b_i = 2^{n-i}$, and the transition to $u_1$ requires $a_i
+ b_i = 2^{n-(i+1)}$. Thus, either we get stuck, or we take the transition
to~$r'_{i}$. The transition between~$r_{i+1}$ and~$r'_i$ resets~$a_i$ and~$b_i$.
Thus, we can repeat the argument, and conclude that locations~$r'_i$ and~$r_i$
are only visited by segments~$R_j$ where $j \bmod 2^{n - i} = 0$.
\qed
\end{proof}

Having shown Lemma~\ref{lem:runs} it is now easy to prove
Lemma~\ref{lem:ssruns}. Let $R$ be a run of $\A_n$.
For the counters $a_i$ and $b_i$, we have the following facts:
\begin{itemize}
\item At the start of the first $i$-block, we have $a_i = b_i = 0$.
\item Each $i$-block contains exactly $2^{n-i}$ segments. Each segment must
increment one of $a_i$ or $b_i$, but not both.
\item At the end of each odd $i$-block, we must take the transition from $r_i$
to $u_1$ to avoid getting stuck. This transition requires $a_i = 2^{n-i}$ and $b_i = 0$.
\item At the end of each even $i$-block, we must take the transition from $r_i$
to $r'_{i-1}$ to avoid getting stuck. This transition requires $a_i = 2^{n-i}$
and $b_i = 2^{n-i}$, and resets $a_i$ and $b_i$ to $0$.
\end{itemize}
These facts imply that $a_i$ must be incremented during every run in an odd
$i$-block to prevent the automaton getting stuck, and $b_i$ must be incremented
during every run in an even $i$-block to prevent the automaton getting stuck. It
can also be verified that, if $a_i$ is incremented during every run in an odd
$i$-block, and $b_i$ is incremented during every run in an even $i$-block, then
the automaton will never get stuck at $r_i$.

Similarly, for the counters $e_i$ and $f_i$ we have the following facts.
\begin{itemize}
\item At the start of the first $i$-block, we have $e_i = f_i = 0$.
\item Each $i$-block contains exactly $2^{n-i}$ runs. Each run must increment
one of $e_i$ or $f_i$, but not both.
\item At the end of each $i$-block, we must take one of the two transitions from
$r'_i$ to $r_i$ to avoid getting stuck. These transitions require that $e_i =
2^{n-i}$ and $f_i = 0$, or $e_i = 0$ and $f_i = 2^{n-i}$.
\end{itemize}
These facts imply that either $e_i$ is incremented during every run in an
$i$-block, or $f_i$ is incremented during every run in an $i$-block, or the
automaton will get stuck when moving from $r'_i$ to $r_i$ at  end of the
$i$-block. It can also be verified that, if the automaton increases $e_i$ during
every run in an $i$-block, then the automaton will not get stuck moving from
$r'_i$ to $r_i$, and if the automaton increases $f_i$ during every run in an
$i$-block, then the automaton will not get stuck moving from $r'_i$ to $r_i$.

Note that, in $\A_n$, it is only possible for $R$ to get stuck at the locations
$r'_i$ and $r_i$. Therefore, we have shown that $R$ does not get stuck if and
only if the three conditions of Lemma~\ref{lem:ssruns} hold for $R$.

\section{Proof of Lemma~\ref{lem:bounded}}
\label{sec:bounded}

This lemma follows from Lemma~\ref{lem:runs}. Let $R$ be a run.
Lemma~\ref{lem:runs} implies that the transition from $r_i$ to $r'_{i-1}$ is
taken in every segment $R_j$ such that $j \bmod 2^{n-(i-1)}$. This transition
resets both $a_i$ and $b_i$ to $0$. Therefore, neither of these counters may
exceed $2^{n-(i-1)}$. Similarly, Lemma~\ref{lem:runs} implies that every segment
$R_j$ such that $j \bmod 2^{n-i} = 0$ must move from $r'_i$ to $r_i$. Both of
the transitions from between $r'_i$ and $r_i$ reset $e_i$ and $f_i$, and
therefore neither of these counters may exceed $2^{n-i}$.

\end{document}